\newcommand{\xMapsto}[2][]{\ext@arrow 0599{\Mapstofill@}{#1}{#2}}
\def\Mapstofill@{\arrowfill@{\Mapstochar\Relbar}\Relbar\Rightarrow}
\newcounter{todocounter}
\newcommand{\nat}{\mathbb{N}}
\newcommand{\al}{\alphabet}
\newcommand{\cw}[1]{{#1}^\ostar}
\newcommand{\cwq}[2]{{#1}^\ostar \! / \! \! \sim_{#2}}
\newcommand{\fwq}[2]{{#1}^* \! / \! \! \sim_{#2}}
\newrobustcmd{\gnl}[1]{\kl[\gnl]{\mathrm{gnlen}{(#1)}}}
\knowledge {\gnl}{notion}
\newrobustcmd{\ei}[1]{\kl[\ei]{\exists^{\infty_{#1}}}}
\knowledge {\ei}{notion}
\newrobustcmd{\ali}[1]{\kl[\ali]{\Delta_{#1}}}
\knowledge {\ali}{notion}
\newrobustcmd{\foione}[1]{\kl[\foione]{\mathrm{FO}^1[(\infty_j)_{j \leq #1}]}}
\knowledge {\foione}{notion}
\newrobustcmd{\foi}[1]{\kl[\foi]{\mathrm{FO}[(\infty_j)_{j \leq #1}]}}
\knowledge {\foi}{notion}
\newrobustcmd{\foinf}{\kl[\foinf]{\ensuremath{\mathrm{FO}[\infty]}}}
\knowledge {\foinf}{notion}
\newrobustcmd{\irank}[1]{\kl[\irank]{\mathcal{I}\text{-}\mathit{rank}({#1})}}
\knowledge {\irank}{notion}
\newcommand{\bpc}{\blockproduct}
\newcommand{\ogi}[1]{\gamma_{#1}}
\newrobustcmd{\focut}{\kl[\focut]{\ensuremath{\mathrm{FO[cut]}}}}
\knowledge {\focut}{notion}
\newrobustcmd{\wmso}{\kl[\wmso]{\ensuremath{\mathrm{WMSO}}}}
\knowledge {\wmso}{notion}
\newrobustcmd{\fo}{\kl[\fo]{\ensuremath{\mathrm{FO}}}}
\knowledge {\fo}{notion}
\newrobustcmd{\bc}{\kl[\bc]{B(\exists^*)}}
\knowledge {\bc}{notion}
\newcommand{\mso}{\ensuremath{\mathrm{MSO}}}
\newcommand{\fotwo}{\ensuremath{\mathrm{FO}^2}}
\newcommand{\foone}{\ensuremath{\mathrm{FO}^1}}
\newcommand{\fothree}{\ensuremath{\mathrm{FO}^3}}
\newcommand{\emptyword}{\varepsilon}
\newrobustcmd{\uone}{\kl[\uone]{\ensuremath{\textnormal{U}_1}\xspace}}
\knowledge{\uone}{notion}
\newrobustcmd{\blockproduct}{\kl[\blockproduct]{\Box}}
\knowledge{\blockproduct}{notion}
\newrobustcmd{\algomegasymb}{\kl[\algomegasymb]{\ensuremath{{\boldsymbol {\tau}}}}}
\knowledge{\algomegasymb}{notion}
\newrobustcmd{\algomegastarsymb}{\kl[\algomegastarsymb]{\ensuremath{\boldsymbol \tau^*}}}
\knowledge{\algomegastarsymb}{notion}
\newrobustcmd{\algshufflesymb}{\kl[\algshufflesymb]{\ensuremath{\boldsymbol \kappa}}}
\knowledge{\algshufflesymb}{notion}
\newcommand{\shuffle}{\eta}
\newcommand{\omegaoper}[1]{\ensuremath{#1^{\algomegasymb}}}
\newcommand{\omegastaroper}[1]{\ensuremath{#1^{\algomegastarsymb}}}
\newcommand{\shuffleoper}[1]{\ensuremath{#1^{\algshufflesymb}}}
\newrobustcmd{\productoper}{\kl[\productoper]{\ensuremath{\boldsymbol \cdot}}}
\knowledge{\productoper}{notion}
\newcommand\rationals{\mathbb{Q}}
\newcommand\nats{\mathbb{N}}
\newcommand{\natorder}{\omega}
\newcommand{\negorder}{\omega^*}
\newcommand{\intorder}{\delta}
\newcommand{\rationalorder}{\shuffle}
\newcommand{\integers}{\mathbb{Z}}
\newrobustcmd{\dom}[1]{\kl[\dom]{\mathit{dom}(#1)}}
\knowledge {\dom}{notion}
\newrobustcmd{\alphabet}{\kl[\alphabet]{\ensuremath{\Sigma}}}
\knowledge {\alphabet}{notion}
\newrobustcmd{\subword}[2]{\kl[\subword]{#1|_{#2}}}
\knowledge{\subword}{notion}
\knowledge{\circplussymbol}{notion}
\newrobustcmd{\wordstar}{\kl[\wordstar] {\circlesymbol}}
\knowledge{\wordstar}{notion}
\newrobustcmd{\wordplus}{\kl[\wordplus] {\circplussymbol}}
\knowledge{\wordplus}{notion}
\newrobustcmd{\omegasymb}{\kl[\omegasymb] {\omega}}
\knowledge{\omegasymb}{notion}
\newrobustcmd{\omegastarsymb}{\kl[\omegastarsymb] {\omega^*}}
\knowledge{\omegastarsymb}{notion}
\newrobustcmd{\shufflesymb}{\kl[\shufflesymb] {\shuffle}}
\knowledge{\shufflesymb}{notion}
\newrobustcmd{\circmonoid}{\kl[\circmonoid]{\circlesymbol\text{-}monoid\xspace}}
\knowledge{\circmonoid}[\circmonoids]{notion}
\newrobustcmd{\circalgebra}{\kl[\circalgebra]{\circlesymbol\text{-}algebra\xspace}}
\knowledge{\circalgebra}[\circalgebras]{notion}
\newrobustcmd{\circalgebras}{\kl[\circalgebras]{\circlesymbol\text{-}algebras\xspace}}
\newrobustcmd{\circmonoids}{\kl[\circmonoids]{\circlesymbol\text{-}monoids\xspace}}
\newcommand\words[1]{#1^{\wordstar}}
\newcommand{\nonemptywords}[1] {#1^{\wordplus}}
\newcommand{\omegaword}[1] {\ensuremath{{#1}^{\omegasymb}}}
\newcommand{\omegastarword}[1] {\ensuremath{{#1}^{\omegastarsymb}}}
\newcommand{\shuffleword}[1] {\ensuremath{{#1}^{\shufflesymb}}}
\newcommand{\finitewords}[1]{\ensuremath{#1^*}}
\newcommand{\circlesymbol}{\ensuremath{\ostar}}
\newcommand{\ostar}{\circledast}
\newcommand{\circplussymbol}{\ensuremath{\oplus}}
\newcommand{\defs}{::=}
\newcommand{\ifthen}{\rightarrow}
\newrobustcmd{\product}{\kl[\product]{\pi}}
\knowledge{\product}{notion}
\newrobustcmd{\unit}{\kl[\unit]{\ensuremath{\mathtt{id}}\xspace}}
\knowledge{\unit}{notion}
\newcommand{\zero}{\ensuremath{0}\xspace}
\newcommand{\monoid}{\ensuremath{\mathbf{M}}\xspace}
\newcommand{\monoidset}{\ensuremath{M}}
\newcommand{\monoidN}{\ensuremath{\mathbf{N}}\xspace}
\newcommand{\monoidK}{\mathbf{K}}
\newcommand{\monoidOf}[1]{\ensuremath{(#1,\unit,\product)}\xspace}
\newcommand{\algebraOf}[1]{\ensuremath{(#1,\unit,\productoper,\algomegasymb, \algomegastarsymb,\algshufflesymb)}\xspace}
\newcommand{\monoidGap}{{\ensuremath{\mathrm{Gap}}}}
\newcommand{\gJ}{{\mathbin{\ensuremath{{\mathcal{J}}}}}}
\newcommand{\gR}{{\mathbin{\ensuremath{{\mathcal{R}}}}}}
\newcommand{\gL}{{\mathbin{\ensuremath{{\mathcal{L}}}}}}
\newcommand{\gH}{{\mathbin{\ensuremath{{\mathcal{H}}}}}}
\newcommand{\Jg}{>_{\gJ}}
\newcommand{\Jeq}{\gJ}
\newcommand{\nJeq}{\cancel{\gJ}}
\newcommand{\Req}{\gR}
\newcommand{\Leq}{\gL}
\newcommand{\Heq}{\gH}
\knowledge{\uone}{notion}
\knowledge {\product}{link=$\circ$-monoid}
\knowledge{\monoid}{notion}
\knowledge{\monoidPowerset}{notion}
\knowledge {\legJ}[\leq _\gJ]{notion}
\knowledge{\rexists}{notion}
\knowledge{\rforall}{notion}
\knowledge{\Bin}{notion}
\knowledge{\Bnin}{notion}
\knowledge {\projection}{notion}
\knowledge {\restrictedMonoidPowerset}{notion}
\knowledge {\Win}{notion}
\knowledge{\proj}{notion}
\begin{document}

%
\title{First-Order logic and its "Infinitary Quantifier" Extensions over "Countable Words"}
\titlerunning{Extended FO}
%
\author{Bharat Adsul$^*$  \and
Saptarshi Sarkar$^*$  \and
A.V. Sreejith$^\dagger$}
\authorrunning{B. Adsul et al.}
%
\institute{$^*$IIT Bombay, India and $^\dagger$IIT Goa, India}

\maketitle              
\begin{abstract}
We contribute to the refined understanding of the language-logic-algebra
interplay in the context of {\em first-order} properties of countable
words. We establish decidable algebraic characterizations of 
one variable fragment of FO as well as boolean closure of existential
fragment of FO via a strengthening of Simon's theorem about piecewise
testable languages. We propose a new extension of FO which admits infinitary
quantifiers to reason about the inherent infinitary properties of countable
words. We provide a very natural and hierarchical  block-product based 
characterization of the new extension. We also explicate its role
in view of other natural and classical logical systems such as WMSO and
FO[cut] - an extension of FO where quantification over Dedekind-cuts is 
allowed. We also rule out the possibility of a finite-basis for
a block-product based characterization of these logical systems. 
Finally, we report simple but novel algebraic characterizations of one 
variable fragments of the hierarchies of the new proposed extension of FO.

\keywords{Countable words  \and First-order logic \and Monoids.}
\end{abstract}

\section{Introduction}
\label{sec:intro}
\newcommand{\myostar}{\circledast}
Over finite words, we have a foundational language-logic-algebra connection (see \cite{wolfgang,Pin_syntactic}) which
equates regular-expressions, \mso-logic, and (recognition by) finite monoids/automata.
In fact, one can effectively associate, to a regular language, its finite {\em syntactic monoid}.
This canonical algebraic structure carries a rich amount of information about the corresponding language.
Its role is highlighted by the classical Schutzenberger-McNaughton-Papert theorem 
(see, for instance, \cite{MPRI-notes}) which shows that {\em aperiodicity} property 
of the syntactic monoid coincides with describability using star-free expressions as well
as definability in First-Order (\fo) logic. 
So, we arrive at a refined understanding of the 
language-logic-algebra connection to an important 
subclass of regular languages: it equates star-free regular expressions, \fo-logic, 
and aperiodic finite monoids.

A variety of algebraic tools have been developed and crucially used
to obtain deeper insights. Some of these tools \cite{MPRI-notes,str_cirBook,DBLP:journals/corr/StraubingW15}
are: ordered monoids, the so-called Green's relations, wreath/block products and 
related principles etc. Let us mention Simon's celebrated theorem \cite{simon} - 
which equates piecewise-testable languages, Boolean closure of the existential fragment 
of \fo-logic and $J$-trivial finite monoids\footnote{It refers to $J$ - one of the fundamental Green's relations}.
It is important to note that this is an effective characterization, that is, 
they provide a decidable characterization of the logical fragment. 
There have been several results of this kind (see the survey 
\cite{gastin_smallfragments}).
Another particularly
interesting set of results is in the spirit of the fundamental 
Krohn-Rhodes theorem. These results establish a block-product based
decompositional characterization of a logical fragment and have 
many important applications \cite{str_cirBook}.
The prominent examples are a characterization of \fo-logic (resp. \fotwo, the two-variable fragment) in terms of strongly (resp. weakly) 
iterated block-products of copies of the unique $2$-element aperiodic monoid.  

One of the motivations for this work is to establish similar results in the theory
of regular languages of "countable words". We use 
the overarching algebraic framework developed in
the seminal work \cite{carton_MSOalgebra} to reason about languages
of "countable words". 
This framework extends the 
language-logic-algebra interplay to the setting of countable words. 
It develops fundamental algebraic structures such as {\em finite}
\circmonoids\ and \circalgebras\ and equates \mso-definability with "recognizability" by these algebraic structures.
A detailed study of a variety of sub-logics of \mso\ over "countable words" is carried out in \cite{colcombetCLO}. 
This study also extends classical Green's relations to \circalgebras\ and makes heavy use of it. 
Of particular interest to us are the results about algebraic "equational" characterizations of \fo, \focut\ -- an 
{\em extension} of \fo\ that allows quantification over "Dedekind cuts" and
\wmso\ -- an {\em extension} of \fo\ that allows quantification over finite 
sets. A decidable algebraic characterization of \fotwo\
over "countable words" is also presented in \cite{sav_fo2clo}.
Another recent development \cite{DBLP:conf/lics/AdsulSS19} is the seamless integration of
"block products" into the countable setting. The work introduces the "block product"
operation of the relevant algebraic structures and establishes an appealing "block product principle". 
Further, it naturally extends the above-mentioned "block product" characterizations of \fo\ and \fotwo\ to "countable words".

In this work, we begin our explorations into the {\em small} fragments of \fo\ over "countable words", guided by the choice of results in \cite{gastin_smallfragments}. We arrive at the language-logic-algebra connection for \foone\ -- the one variable fragment of FO. Coupled with earlier results about \fotwo\ and \fo=\fothree\
(see \cite{GHR}), 
this completes our algebraic understanding of \fo\ fragments defined by the number of permissible variables.
We next extend "Simon's theorem" on "piecewise testable languages" to "countable words" and provide a natural algebraic
characterization of the Boolean closure of the existential-fragment of \fo.
Fortunately or unfortunately, depending on the point of view, this 
landscape of small fragments of \fo\ over "countable words" 
parallels very closely the same landscape over finite words. This can be
attributed to the limited expressive power of \fo\ over "countable words".
For instance, B\`es and Carton \cite{bes_foscat} showed that the seemingly natural `finiteness' property (that the
set of all positions is a finite set) of "countable words" can not be expressed in \fo! 

One of the main contributions of this work is the introduction of 
new {\em infinitary} quantifiers to \fo. The works
\cite{dec_and_gen_qua,gradel_fotwo} also extend \fo\ over arbitrary structures by 
{\em cardinality/finitary-counting} quantifiers and studies
decidable theories thereof. An extension of \fo\ over finite and $\omega$-words by
{\em modulus-counting quantifiers} is algebraically characterized in \cite{straubing_genQuant}.
The main purpose
of our new quantifiers is to naturally allow expression of 
infinitary features which are inherent in the countable setting and study
the resulting definable formal languages in the algebraic 
framework of \cite{carton_MSOalgebra}.
An example formula using such an infinitary quantifier is: 
$\ei 1 x: a(x) \land \neg \ei 1 x: b(x)$. In its natural
semantics, this formula with one variable asserts that there are infinitely
many $a$-labelled positions and only finitely many $b$-labelled positions. 
We propose an extension of \fo\ called \foinf\ that supports {\em first-order} 
"infinitary quantifiers" of the form $\ei k x$ to talk about existence
of higher-level infinitely (more accurately, "Infinitary rank" $k$) many witnesses $x$.
We organize \foinf\ in a natural hierarchy based on the maximum allowed 
infinitary-level of the quantifiers.

We now summarize the key technical results of this paper.
We establish a hierarchical "block product" based characterization of 
\foinf. Towards this, we identify 
an appropriate simple family of \circalgebras\ and show that
this family (in fact, its initial fragments) serve as a basis 
in our hierarchical "block product" based characterization.
We establish that \foinf\ properties can be expressed simultaneously 
in \focut\ as well as \wmso. 
We also show that 
the language-logic-algebra connection for \foone\ admits novel generalizations
to the one variable fragments of the new extension of \fo. 
We finally present `no finite "block product" basis' theorems for our 
\fo\ extensions, \focut, and the class $\focut \cap \wmso$. 
This is in contrast with \cite{DBLP:conf/lics/AdsulSS19} 
where the unique 2-element \circalgebra\ is a basis for
a block-product based characterization of \fo. 

The rest of the paper is organized as follows. 
Section \ref{sec:prelims} recalls basic notions about "countable words" and summarizes the necessary algebraic
background from the framework \cite{carton_MSOalgebra}. Section \ref{sec:fo} deals with the small fragments
of \fo: \foone\ and the Boolean closure of the existential fragment of \fo. 
Section \ref{sec:fo-inf} contains the extensions \foinf\ and results relevant to it. 
Section \ref{sec:nofinbasis} is concerned with `no finite "block product" basis' theorems. 

%
\newtheorem{prop}{Proposition}
\newtheorem{defn}{Definition}
\section{Preliminaries}
\label{sec:prelims}
In this section we briefly recall the algebraic framework developed in \cite{carton_MSOalgebra}.
\paragraph{{\bf Countable words}}
\label{subsection:linear orderings}
\AP
A {\em countable} \intro{linear ordering} (or simply ordering) $\alpha = (X,<)$ is a non-empty countable set $X$ 
equipped with a total order: $X$ is the \intro{domain} of $\alpha$. An ordering
$\beta = (Y,<)$ is called a \intro{subordering} of $\alpha$ if $Y \subseteq X$ and
the order on $Y$ is induced from that of $X$.
We denote by $\natorder, \negorder, \intorder, \rationalorder$ the orderings $(\nats, <), (-\nats, <), (\integers,<), (\rationals,<)$ respectively. 
A \intro{Dedekind cut} (or simply a \kl{cut}) is a left-closed subset $Y \subseteq X$ of $\alpha$.
%
Given disjoint \kl{linear orderings} $(\beta_i)_{i\in\alpha}$ indexed with a \kl{linear ordering}~$\alpha$, their
\intro{generalized sum} $\sum_{i\in\alpha}\beta_i$ is the \kl{linear ordering} over the
 union of the domains of the $\beta_i$'s, with the order defined by $x<y$ if either
$x\in\beta_i$ and $y\in\beta_j$ with $i<j$, or $x,y\in\beta_i$ for some~$i$, and $x<y$ in
$\beta_i$.  
The book \cite{rosenstein} contains a detailed study of \kl{linear orderings}.

\AP
An \intro{alphabet} $\intro{\alphabet}$ is a finite set of symbols called \intro{letters}. 
Given a \kl{linear ordering} $\alpha$, a \intro{countable word} (henceforth called \kl{word}) over $\alphabet$ of \kl{domain} $\alpha$ is a mapping 
$w: \alpha \rightarrow \alphabet$. 
The \kl{domain} of a \kl{word} is denoted $\intro[\dom]{}\dom w$. For a subset $I \subseteq \dom w$, $\subword{w}{I}$ denotes the \intro{subword} got by restricting $w$ to the domain $I$. If $I$ is an ""interval"" ($\forall x,y \in I$, $x < z < y \rightarrow z \in I$) then $\intro[\subword]{}\subword{w}{I}$ is called a \intro{factor} of $w$.
The set of all words is denoted $\intro[\wordstar]{}\words \alphabet$ and the set of all non-empty (resp. finite) words $\intro[\wordplus]{}\nonemptywords \alphabet$ (resp. $\finitewords \alphabet$). A \intro{language} (of countable words) is a subset of $\words\alphabet$. 
%
%
%
The \intro{generalized concatenation} of the \kl{words} $(w_i)_{i\in\alpha}$ indexed by a \kl{linear ordering}~$\alpha$ is $\prod_{i \in \alpha} w_i$ and denotes the \kl{word} $w$ of \kl{domain} $\sum_{i \in \alpha} \beta_i$ where $\beta_i$ are disjoint and such that $\subword{w}{\beta_i}$ is isomorphic to $w_i$ for all $i \in \alpha$.

\AP
The \intro{empty word} $\emptyword$, is the only \kl{word} of empty \kl{domain}. The \intro{omega power} of a word $w$ is defined as $\intro[\omegasymb]{}\omegaword w \defs \prod_{i \in \omega} w$. 
The \intro{omega$^*$ power} of a word $w$, denoted by
$\intro[\omegastarsymb]{}\omegastarword{w}$, is the concatenation of omega$^*$ many $w$'s.
The \intro[shuffle]{perfect shuffle} for a non-empty finite set of \kl{letters} $A
\subseteq \alphabet$ (denoted by {$\shuffleword A$}) is a \kl{word} of
\kl{domain}~$(\rationals,<)$ in which only letters from $A$ occur and, all non-empty and non-singleton "intervals" contain at least one
occurrence of each letter in $A$. This \kl{word} is unique up to isomorphism
\cite{shelah_MSO}. We can extend the notion of "perfect shuffle" to a finite set of words
$W = \{w_1,\dots,w_k\}$. We define $\intro[\shufflesymb]{}\shuffleword W$ to be $\prod_{i
\in \rationals} w_{f(i)}$ where $f: (\rationals, <) \to \{1,2,\dots,k\}$ is the unique "perfect shuffle" over the set of letters $\{1,2,\dots,k\}$.

\paragraph{{\bf The algebra}}
\label{subsection:o-monoids}
\AP
A  \intro[\circmonoid]{\circmonoid} \intro[\product]{}\intro[\unit]{}$\monoid = \monoidOf \monoidset$ is a set $\monoidset$ equipped with an operation $\product$, called the \intro{product}, from $\words \monoidset$ to $\monoidset$, that satisfies $\product(a)=a$ for all $a\in \monoidset$, and the \intro{generalized associativity} property: for every \kl{words} $u_i$ over $\monoidset$ with $i$ ranging over a \kl{countable linear ordering} $\alpha$,
$\product\left(\prod_{i\in\alpha}u_i\right)=\product\left(\prod_{i\in\alpha}\product(u_i)\right)$.
We reserve the notation $\unit$ for the \intro{identity element}
$\unit=\product(\emptyword)$;
it is called the neutral element in \cite{carton_MSOalgebra}. An example of a
\circmonoid\ is the free \circmonoid\ $(\words \alphabet, \emptyword, \prod)$ over the
alphabet $\alphabet$ with the product being the "generalized concatenation".
Now we 
discuss some natural algebraic notions. 
A ""morphism"" from a \circmonoid\ $(\monoidset,\unit,\pi)$ to a \circmonoid\ $(\monoidset',\unit',\pi')$ is a map $h: \monoidset \rightarrow \monoidset'$ such that, for
every $w \in \words{\monoidset}$, $h(\pi(w))=\pi'(\bar{h}(w))$ where $\bar{h}$ is the pointwise extension of $h$ to words.
We skip the notions sub-$\circmonoid$ and direct products since they are as expected. 
We say $\monoid = (\monoidset,\unit, \pi)$ \intro{divides} $\monoid' = (\monoidset',\unit', \pi')$ if there exists a sub "\circmonoid"
$\monoid'' = (\monoidset'', \unit'', \pi'')$ of $\monoid'$ and a surjective morphism from $\monoid''$ to $\monoid$.

%
%
\AP
A \circmonoid\ $\monoid = (\monoidset,\unit, \pi)$ 
is said to be finite if \monoidset\ is so. Note that, even for a finite \kl{\circmonoid}, 
the product operation $\product$ has an infinitary description. It turns out that $\product$ can be captured using finitely presentable \intro{derived operations}.\intro[\algomegasymb]{}\intro[\algomegastarsymb]{}\intro[\algshufflesymb]{}\intro[\productoper]{}
Corresponding to a "\circmonoid" $\monoidOf{\monoidset}$ there is an induced \intro[\circalgebras]{\circalgebra} $\monoid = \algebraOf \monoidset$ where the operations are defined as following: for all $a, b \in \monoidset$, $a \productoper b = \product(ab)$, $\omegaoper a = \product(\omegaword a)$, $\omegastaroper a = \product(\omegastarword a)$ and for all $\emptyset \neq E \subseteq \monoidset$, $\shuffleoper E = \product(\shuffleword E)$. 
These "derived operators" satisfy certain natural axioms; see \cite{carton_MSOalgebra} for
more details. 
It has been established in \cite{carton_MSOalgebra} that 
an {\em arbitrary} finite \circalgebra\ $\monoid = \algebraOf \monoidset$ satisfying these 
natural axioms is induced by a unique \circmonoid\ $\monoid = \monoidOf \monoidset$.
We later introduce the notion of an \emph{evaluation tree} which aids this correspondence.
It is rather straightforward to define the notions of morphisms, subalgebras,
direct-products as well as division for \circalgebras.

\AP
It follows from the definition of a \circalgebra\ $\monoid = \algebraOf \monoidset$ that $(\monoidset, \unit, \productoper)$ is a ""monoid"",
that is the operation $\productoper$ is associative with identity $\unit$. 
For a singleton set $E=\{m\}$, we write $\shuffleoper m =\shuffleoper{\{m\}}$. 
Notice that for all $m \in \monoidset$, $m \productoper \unit = \unit \productoper m = m$ and
for all $\emptyset \neq E \subseteq \monoidset$, $\shuffleoper E = \shuffleoper{(E \cup \{
\unit \})}$.
Further, $\omegaoper{\unit}=\omegastaroper{\unit}=\shuffleoper \unit=\unit$.
As a result, in our definitions of \circalgebras\  later in the paper, we restrict
the descriptions of "derived operators" to $\monoidset \setminus \{\unit\}$.

\AP
An ""evaluation tree"" over a word $u \in \words{\monoidset} \backslash \{\emptyword\}$ is a tree ${\mathcal T} = (T, h)$
such that every branch/path of ${\mathcal T}$ is of finite length and 
where every vertex in $T$ is a factor of $u$, the root is $u$ and $h:T \rightarrow \monoidset$ is a map such that:
\begin{itemize}
\item A leaf is a singleton letter $a \in \monoidset$ such that $h(a) = a$.
\item Internal nodes have either two or $\omega$ or $\omega^*$ or $\rationals$ many children.
\item If $w$ has children $v_1$ and $v_2$, then $w=v_1v_2$ and $h(w) = h(v_1) \productoper h(v_2)$.
\item If $w$ has $\omega$ many children $\langle v_1, v_2, \dots \rangle$, then there is an 
	""idempotent""\footnote{An "idempotent" is an element $e$ where $e\productoper e =
	e$} $e$ such that $e = h(v_i)$ for all $i \geq 1$, and $w = \prod_{i \in \omega} v_i$ and $h(w) = \omegaoper{e}$.
\item If $w$ has $\omega^*$ many children $\langle \dots, v_{-2}, v_{-1} \rangle$, then there is an "idempotent" $f$ such that $f = h(v_i)$ for all $i \leq -1$, and $w = \prod_{i \in \omega^*} v_i$ and $h(w) = \omegastaroper{f}$.
\item If $w$ has $\rationals$ many children $\langle v_i \rangle_{i \in \rationals}$, then $w = \prod_{i \in \rationals} v_{i}$ where for the "perfect shuffle" $f$ over an $E=\{a_1,\dots, a_k\} \subseteq \monoidset$, $h(v_{i}) = a_{f(i)}$, and $h(w) = \shuffleoper E$.
\end{itemize}
\AP
The ""value"" of ${\mathcal T}$ is defined to be $h(u)$. It was shown in \cite[Proposition
8 and 9]{carton_MSOalgebra} that every word $u$ has an "evaluation tree" and the "values"
of two "evaluation trees" of $u$ are equal and they are equal to $\product(u)$. Therefore,
a \circalgebra\ defines the "generalized associativity" product $\product:\words{\monoidset}
\rightarrow \monoidset$. The correspondence between finite \circmonoids\ and
\circalgebras\ permits interchangeability; we exploit it implicitly.

\AP
A morphism from the {\em free} $\circmonoid\ \words{\alphabet}$ to
$\monoid$ is described (determined) by a map $h': \alphabet \to \monoidset$; we simply
write $h': \alphabet \to \monoid$. With 
$h'$ also denoting its pointwise extension 
$h': \words{\alphabet} \to \words{\monoidset}$, given a word $u \in \words{\alphabet}$, 
we can use the evalution tree over the word $h'(u) \in \words{\monoidset}$ to
obtain $\product(h'(u)) \in \monoidset$. By further abuse of notation, $h':\words{\alphabet}
\to \monoidset$ also denotes the morphism which sends $u$ to $\product(h'(u))$. We say that $L$ is recognized by $\monoid$ if there exists a map/morphism $h': \words \alphabet \to \words \monoid$ such that $L=h'^{-1} (h'(L))$.
The fundamental result of \cite{carton_MSOalgebra} states that ""regular languages""
(\mso\ definable "languages") are exactly those "recognized" by finite \circmonoids\
(equivalently \circalgebras). It is important to note that, every regular language $L$
is associated a finite (canonical/minimal) syntactic \circmonoid\ which divides every
\circmonoid\ that recognizes $L$. Further, it can be 
represented as a \circalgebra\
from a finite description of $L$.
\AP
\begin{example}\label{example:uone}
	 The \circmonoid\ $\intro{\uone} = \monoidOf{\{\unit, \zero \}}$ and its induced \circalgebra\ are shown on the left and right respectively.
\begin{equation*}
\pi(u) = 
\begin{cases}
\unit & \text{if}\ u \in \words{\{\unit \}}\\
\zero & \text{otherwise}
\end{cases}
~~~~~~~~~~~~\begin{tabular}{l|ll|l|l}
			& \unit & \zero & $\algomegasymb$ & $\algomegastarsymb$  \\
			\hline
			\unit & \unit & \zero & \unit                     & \unit    \\
			\zero & \zero & \zero & \zero                     & \zero                                                      
\end{tabular}
~~~~~ \shuffleoper S = \begin{cases}
\unit & \text{if}\ S = \{\unit\} \\
\zero & \text{otherwise}
\end{cases}
\end{equation*}
Let $\Sigma=\{a,b\}$ and $L$ be the set of words which contain an occurence of letter $a$.
It is easy to see that the map $h: \Sigma \to \uone$ sending $h(a)=\zero, h(b)=\unit$
recognizes $L$ as $L=h^{-1}(\zero)$. In fact, $\uone$ is the syntactic \circmonoid\ of
$L$.
\end{example}
\newcommand{\cci}{\mathsmaller{[~]}}
\newcommand{\oci}{\mathsmaller{(~]}}
\newcommand{\coi}{\mathsmaller{[~)}}
\newcommand{\ooi}{\mathsmaller{(~)}}
\begin{example} \label{example:gap}
Consider the \circalgebra\ $\monoidGap$= $\algebraOf{\{\unit,\cci, \oci, \coi, \ooi, g\}}$.
We let $\Sigma=\{a\}$ and define the map $h: \Sigma \to \monoidGap$ as $h(a)= \cci$.
The resulting morphism maps a word $u$ to $h(u)=g$ iff the word $u$ admits a "gap"; that
is a "cut" with no maximum and its complement has no minimum. Other words are mapped to
their right `ends-type': for instance, $h(u)=\coi$ iff $\dom u$ has a minimum and no maximum.
For a "word" $v = \omegaword a \omegastarword a$, the pointwise extension
$v'=h(v)=\omegaword{\cci} \omegastarword{\cci}$. 
An example "evaluation tree" $\mathcal{T}$ for $v'$ consists of root with two children.
The left (resp. right) child has $\omega$ (resp. $\omega^*$) many children
$\cci$ and has "value" $\omegaoper{\cci}$ (resp. $\omegastaroper{\cci}$). 
As a result, the value of $\mathcal{T}$ is $\omegaoper{\cci} \productoper
\omegastaroper{\cci} = \coi \productoper \oci=g$.
\begin{align*}
\begin{array}{c|cccccc|c|c}
\productoper&\,\,\cci\,&\,\coi\,&\,\oci\,&\,\ooi\,&g &\,\quad&\, \algomegasymb\, &\,\algomegastarsymb\\
\hline
\cci	\,&\cci	&\coi&\cci&\coi\,	&g &\,	&\,\coi\,&\,\oci	\\
\coi	\,&\cci&\coi	& g & g\,			&g &\,	&\,\coi\,&\,\ooi		\\
\oci	\,&\oci &\ooi &\oci &\ooi \,	&g &\,	&\,\ooi \,&\,\oci	\\
\ooi	\,& \oci &\ooi &g&g\,					&g &\,	&\,g\,&\,g \\
g & g & g & g & g & g &  & g & g
\end{array}
&&
\shuffleoper S&=\begin{cases}
				\unit&\text{if }S = \{\unit\} \\
				g&\text{otherwise}
				\end{cases}
\end{align*}
\end{example}

\AP
We can characterize sets of \circmonoids\ using \intro{identities}. For example, we say that $\uone$ satisfies ""commutative"" equation $x \productoper y = y \productoper x$. This means that the equation holds for any assignment of elements in the \circmonoids\ to the variables $x$ and $y$. 
Like in the case of "monoids", the set of \circmonoids\ satisfying a set of equations are closed under subsemigroup, "division" and ""direct product"" \cite{colcombetCLO}.

\AP
The ""block product"" of \circmonoids\ \monoid and \monoidN, is denoted by $\intro[\blockproduct]{} \monoid \blockproduct \monoidN$ and is the semidirect product of $\monoid$  and $\monoidK = \monoidN^{M \times M}$ with respect to the {\em canonical} left and right `action' of $\monoid$ on $\monoidK$. The details are given in \cite{DBLP:conf/lics/AdsulSS19}. The ""block product principle"" characterizes languages defined by "block product" of \circmonoids. 
Towards this, fix a map $h: \alphabet \to \monoid \blockproduct \monoidN$ such that
$h(a)=(m_a, f_a)$ where $m_a \in \monoidset$ and $f_a: M \times M \to N$. The map
$h_1: \alphabet \to \monoid$ setting $h_1(a)=m_a$ defines a morphism $h_1: \words{\alphabet} \to
M$. We define the \emph{transducer} $\sigma: \words{\Sigma} \to \words{(M \times \Sigma
\times M)}$ as follows: let $u \in \words{\Sigma}$ with domain $\alpha$. The word
$u'=\sigma(u)$ has domain $\alpha$ and for a position $x \in \alpha$, 
$u'(x) = (h_1 (u_{< x}), u(x), h_1 (u_{> x}))$. Here $u_{< x}$ (resp. $u_{> x})$)
is the subword of $u$ on positions strictly less (resp. greater) than $x$.

\begin{proposition} [Block Product Principle \cite{DBLP:conf/lics/AdsulSS19}]
\label{prop:bpp}
\label{cor:bpp}
Let $L \subseteq \words{\alphabet}$ be recognized by 
$h: \alphabet \to \monoid \blockproduct \monoidN$ 
Then $L$ is a boolean combination of languages of the form $L_1$ and $\sigma^{-1}(L_2)$
where $L_1$ and $L_2$ are recognized by $\monoid$ and $\monoidN$ respectively and
$\sigma: \words{\alphabet} \rightarrow \words{(\monoidset \times \alphabet \times \monoidset)}$ is a state-based transducer.
\end{proposition}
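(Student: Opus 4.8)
The plan is to adapt the finite-word proof of the block product principle, peeling off the two layers of $\monoid \blockproduct \monoidN$ one position at a time; the one genuinely new point is to check that this peeling is compatible with the three infinitary derived operators of a \circalgebra. Write $h$ also for the induced morphism $\words\alphabet\to\monoid \blockproduct \monoidN$, and let $\pi_{\monoidN}$ be the product of $\monoidN$; keep $h_1$ and $\sigma$ as introduced before the statement. Since $L=h^{-1}(h(L))$ and $h(L)\subseteq\monoidset\times\monoidNset^{\,\monoidset\times\monoidset}$ is finite, it suffices to show that each fibre $h^{-1}\big((m,f)\big)$ is a boolean combination of languages of the two claimed shapes, and then take the finite union over $(m,f)\in h(L)$.

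The first-component projection $\monoid \blockproduct \monoidN\to\monoid$ is a morphism taking $h(a)$ to $h_1(a)$, so the first component of $h(u)$ is always $h_1(u)$. The core is the following \emph{layer-splitting identity} for the second component: for $u\in\words\alphabet$ with $\dom u=\alpha$, writing $h(u(i))=(m_i,f_i)$, $\ell_i=h_1(u_{<i})$ and $r_i=h_1(u_{>i})$ for $i\in\alpha$, the second component $f$ of $h(u)$ satisfies, for all $x,y\in\monoidset$,
\[
  f(x,y)\;=\;\pi_{\monoidN}\!\Big(\big\langle\,f_i\big(x\con\ell_i,\ r_i\con y\big)\,\big\rangle_{i\in\alpha}\Big).
\]
I would establish this by structural induction on an evaluation tree of $u$. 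When the root has two children, this is the usual semidirect-product computation: the second component of $(m_1,f_1)\productoper(m_2,f_2)$ sends $(x,y)$ to $f_1(x,\,m_2\con y)\productoper f_2(x\con m_1,\,y)$, and one concludes by generalized associativity of $\pi_{\monoidN}$. When the root has $\omegasymb$-, $\omegastarsymb$- or $\rationals$-many children, the key observation is that every position inside one of these children (in the $\omegasymb$-case, every such position outside the first child) sees the \emph{same} external $\monoid$-context contributed by the remaining children: the relevant repeated element $e=(m_e,f_e)$, being idempotent, forces $m_e\productoper m_e=m_e$, so the $\monoid$-value of the block of preceding (resp.\ succeeding) children stabilises to a constant, while any proper prefix or suffix of a perfect shuffle over a set is again a perfect shuffle over that same set, so there the external context is literally constant. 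It follows that $\big\langle f_i(x\con\ell_i,\,r_i\con y)\big\rangle_{i\in\alpha}$ is itself an $\omegasymb$-power, an $\omegastarsymb$-power, or a perfect shuffle of the finitely many distinct $\monoidN$-value words of the children, so applying $\pi_{\monoidN}$ to it amounts to applying the matching derived operator of $\monoidN$ to the second-component values already computed for the children by the induction hypothesis; by the definition of the \circalgebra\ structure of $\monoid \blockproduct \monoidN$ given in \cite{DBLP:conf/lics/AdsulSS19}, this is precisely the second component of $\omegaoper e$, $\omegastaroper e$, or $\shuffleoper{E}$ respectively. I expect this last step --- reconciling the context bookkeeping with the concrete formulas for the derived operators of $\monoid \blockproduct \monoidN$ --- to be the only delicate part; the leaf case is immediate and the rest is routine.

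Once the identity is in hand, the assembly is short. Fix $(m,f)\in h(L)$. For $p,q\in\monoidset$ let $g_{p,q}\colon\monoidset\times\alphabet\times\monoidset\to\monoidN$ be given by $g_{p,q}(a,c,b)=f_c(p\con a,\,b\con q)$, where $f_c$ is the second component of $h(c)$, and write $g_{p,q}$ also for the induced morphism $\words{(\monoidset\times\alphabet\times\monoidset)}\to\monoidN$. Since $\sigma(u)$ carries at position $i$ the triple $(h_1(u_{<i}),\,u(i),\,h_1(u_{>i}))=(\ell_i,u(i),r_i)$, applying $g_{p,q}$ letterwise to $\sigma(u)$ produces the word whose $i$-th letter is $f_i(p\con\ell_i,\,r_i\con q)$, so the layer-splitting identity says exactly that the $(p,q)$-entry of the second component of $h(u)$ equals $g_{p,q}(\sigma(u))$. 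Hence, for every $n\in\monoidNset$,
\[
  \big\{\,u : \text{the }(p,q)\text{-entry of the second component of }h(u)\text{ equals }n\,\big\}\;=\;\sigma^{-1}\big(g_{p,q}^{-1}(n)\big),
\]
and $g_{p,q}^{-1}(n)$ is recognized by $\monoidN$. Therefore
\[
  h^{-1}\big((m,f)\big)\;=\;h_1^{-1}(m)\ \cap\ \bigcap_{(p,q)\in\monoidset\times\monoidset}\sigma^{-1}\big(g_{p,q}^{-1}(f(p,q))\big),
\]
a \emph{finite} intersection --- $\monoidset$ being finite --- of a language recognized by $\monoid$ with languages of the form $\sigma^{-1}(L_2)$, $L_2$ recognized by $\monoidN$; taking the finite union over $(m,f)\in h(L)$ exhibits $L$ as the required boolean combination.
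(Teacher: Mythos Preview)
The paper does not give its own proof of this proposition: it is stated with a citation to \cite{DBLP:conf/lics/AdsulSS19} and used as a black box throughout. So there is nothing to compare your argument against in this paper.

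That said, your proposal is a correct reconstruction of the standard block-product-principle argument, extended to \circalgebras. The layer-splitting identity you isolate is exactly the right invariant, and the final assembly into a finite boolean combination via the maps $g_{p,q}$ is clean and correct. One small imprecision worth tightening: in the $\omegasymb$-case, the $\monoidN$-word $\langle f_i(x\ell_i,r_iy)\rangle_{i\in\alpha}$ is not literally an $\omegasymb$-power of a fixed word, since the children $v_j$ may differ as words even though their $h$-values coincide. What you actually use (and what works) is that, after collapsing each child to a single $\monoidN$-element via generalized associativity and the induction hypothesis, you obtain one element followed by an $\omegasymb$-sequence of a constant element, and this matches the formula for $\omegaoper{(e,g)}$ recalled later in Section~\ref{sec:nofinbasis}. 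Your treatment of the shuffle case---that every proper prefix or suffix of a perfect shuffle over $E$ is again a perfect shuffle over $E$, so the external $\monoid$-context is literally $\shuffleoper{E}$ on both sides---is the right observation and goes through.
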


\section{Small fragments of FO}
\label{sec:fo}
%
%
\AP
In this section, we focus on two particularly small fragments of first-order logic
interpreted over
countable words. First-order logic uses variables $x,y,z,\ldots$ which are interpreted
as positions in the domain of a word.
The syntax of \intro{first-order logic} (\fo) is: $x<y \mid a(x) \mid \phi_1 \wedge \phi_2 \mid \phi_1 \vee \phi_2 \mid  \neg \phi \mid
\exists x ~\phi$, for all $a \in \alphabet$.

We skip the natural semantics.
 A "language" $L$ of "countable words" is said to be \fo-definable if there exists
an \fo-sentence $\phi$ such $L = \{ u \in \words{\Sigma} \mid u \models \phi\}$.

\AP
Recall that the classical Schutzenberger-McNaughton-Papert theorem characterizes
\fo-definabilty of a regular language of finite words in terms of aperiodicity of its
finite syntactic monoid.
The survey \cite{gastin_smallfragments}
presents similar decidable characterizations of several interesting small fragments of
\fo-logic such as \foone, \fotwo, $\intro{\bc}$ -- boolean closure 
of the existential first-order logic. 
It is known {\cite{GHR} that, over finite {\em as well as countable words}, \fo\ = \fothree.
As mentioned in the introduction, over countable words, we already have decidable
algebraic characterizations of \fothree\ from \cite{colcombetCLO} and \fotwo\ from
\cite{sav_fo2clo}.
%
Here we identify decidable algebraic characterizations, over "countable words", for \foone\
and $\bc$.
\subsection{FO with single variable}
\label{subsec:fo-one}
The fragment \foone\ has access to only one variable.
We recall that over finite words a "regular language" is \foone-definable iff its
syntactic monoid is "commutative" and "idempotent". We henceforth focus our attention
to \foone\ on "countable words".

Clearly, \foone\ can recognize all words with a particular "letter". With a single variable
the logic cannot talk about order of "letters" or count the number of occurrence of a "letter". 
This gives an intuition that the syntactic
\circmonoid\ of a language definable in \foone\ is "commutative" and "idempotent". 

\AP
We say that a \circalgebra\ $\monoid=\algebraOf \monoidset$ is ""shuffle-trivial"" if it satisfies the equational "identity": 
\[\shuffleoper{\{x_1, \ldots, x_p\}} = x_1 \productoper x_2 \productoper \ldots \productoper x_p\]

Then $\monoid$ is commutative: 
$x \productoper y = \shuffleoper{\{x,y\}} = \shuffleoper{\{y,x\}} = y \productoper x$. Moreover, every element of $\monoid$ is a
""shuffle-idempotent"": for all $m \in \monoidset, \shuffleoper{m}=m$. It is a consequence of the axioms of a \circalgebra\ that a "shuffle-idempotent" is
an "idempotent".


%
%


\newcommand{\letters}{\mathrm{alpha}}

\begin{theorem}\label{thm:foone}
	Let $L \subseteq \words{\al}$ be a "regular language". The following are equivalent.
\begin{enumerate}
	\item $L$ is recognized by some finite "shuffle-trivial" \circalgebra.
	\item $L$ is a boolean combination of "languages" of the form $\words B$ where $B
		\subseteq \al$.
	\item $L$ is definable in \foone.
	\item $L$ is "recognized" by direct product of $\uone$s.
	\item The syntactic \circalgebra\ of $L$ is "shuffle-trivial".
\end{enumerate}
\end{theorem}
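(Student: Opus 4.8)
The natural strategy is to prove the cycle of implications $(1)\Rightarrow(2)\Rightarrow(3)\Rightarrow(4)\Rightarrow(5)\Rightarrow(1)$, mirroring the classical finite-word argument but accounting for the infinitary operators $\omegasymb$, $\omegastarsymb$ and $\shufflesymb$. The implications $(3)\Rightarrow(4)$ and $(4)\Rightarrow(5)$ should be routine: for $(3)\Rightarrow(4)$, by induction on the structure of an \foone\ formula one checks that $\exists x\,\phi$ only tests, for each letter $a$, whether some $a$-position satisfies a quantifier-free condition — each such condition is recognized by a copy of $\uone$, and conjunction/negation stay within the direct product; for $(4)\Rightarrow(5)$, the syntactic \circalgebra\ divides the recognizing algebra, and the class of "shuffle-trivial" \circalgebras\ is closed under "division" and "direct product" (this uses the general fact, cited in the excerpt, that equationally-defined classes of \circmonoids\ are closed under subalgebra, division and direct product), and $\uone$ itself is visibly "shuffle-trivial". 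Finally $(5)\Rightarrow(1)$ is immediate since the syntactic \circalgebra\ recognizes $L$.

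The two implications carrying the real content are $(1)\Rightarrow(2)$ and $(2)\Rightarrow(3)$. For $(2)\Rightarrow(3)$ one just writes, for each $B\subseteq\al$, the sentence $\bigwedge_{b\in B}\exists x\, b(x)\ \wedge\ \bigwedge_{a\notin B}\neg\exists x\, a(x)$, which defines $\words B$ using a single reusable variable; boolean combinations of \foone-sentences are \foone-sentences, so this is the easy direction. The heart of the argument is $(1)\Rightarrow(2)$: given a morphism $h\colon\words\al\to\monoid$ onto a finite "shuffle-trivial" \circalgebra, I want to show the value $h(u)$ depends only on the set of letters occurring in $u$, i.e. on $\mathrm{alpha}(u)=\{a\in\al: a \text{ occurs in } u\}$. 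Once this is established, each class $h^{-1}(m)$ is a finite union of sets $\{u:\mathrm{alpha}(u)=B\}$, and $\{u:\mathrm{alpha}(u)=B\}=\words B\setminus\bigcup_{a\notin B,\,a\in\al}\big(\words{B\cup\{a\}}\setminus\cdots\big)$ — more cleanly, $\{u:\mathrm{alpha}(u)=B\} = \words B \cap \bigcap_{a\in\al\setminus B}(\words\al\setminus\words{(\al\setminus\{a\})})$, which is a boolean combination of sets of the form $\words C$; hence so is $L$.

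The key claim, then, is: \emph{if $\monoid$ is "shuffle-trivial" and $u,v\in\words\al$ have the same set of occurring letters, then $h(u)=h(v)$.} The clean way to see this is to show $h(u)=\shuffleoper{\{h(a):a\in\mathrm{alpha}(u)\}}$ for every nonempty $u$ (with $h(\emptyword)=\unit$). I would prove this by induction on an "evaluation tree" $\mathcal{T}$ for (the image under $h$ of) $u$: at a leaf $a$ the value is $h(a)=\shuffleoper{h(a)}$ by "shuffle-idempotence"; at a binary node the two children have letter-sets $A_1,A_2$ with union $A$, and $\shuffleoper{A_1}\productoper\shuffleoper{A_2} = \shuffleoper{A_1}\cdots\productoper\cdots = \shuffleoper{A_1\cup A_2}$ using commutativity, "shuffle-idempotence" and the absorption identity $\shuffleoper{E}\productoper\shuffleoper{E'}=\shuffleoper{E\cup E'}$ (which itself follows from "shuffle-triviality": $\shuffleoper{E}\productoper\shuffleoper{E'}$ expands to a product of the elements of $E$ then $E'$, and a product of idempotents in a commutative idempotent monoid equals the shuffle of the underlying set); at an $\omega$-node all children have the same value $e=\shuffleoper{A'}$ for their common letter-set $A'$, and one needs $\omegaoper{(\shuffleoper{A'})}=\shuffleoper{A'}$ — i.e. a "shuffle-idempotent" value is also an "omega"-fixed point — and that the overall letter-set is the union of the children's, which by the argument for binary nodes is again $A'$; similarly for $\omega^*$-nodes; at a $\rationals$-node whose children realize letter-values in $E=\{a_{f(i)}\}$, the value is $\shuffleoper{\{h(a_{f(1)}),\dots\}}$ and again the union of the children's letter-sets is the letter-set of $u$. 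The main obstacle is exactly verifying these small algebraic lemmas about the infinitary operators — that in a "shuffle-trivial" \circalgebra\ every element equal to some $\shuffleoper{E}$ absorbs itself under $\productoper$, $\omegaoper{(\cdot)}$, $\omegastaroper{(\cdot)}$ and further shuffles — from the \circalgebra\ axioms (notably that a "shuffle-idempotent" is an "idempotent", and the axioms relating $\omegasymb$, $\omegastarsymb$ to $\productoper$ and $\shufflesymb$). These are finitary computations once the right identities are isolated, but getting them from the axiom list in \cite{carton_MSOalgebra} is where care is needed; everything else is bookkeeping on the "evaluation tree".
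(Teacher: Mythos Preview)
Your approach is essentially the paper's: the same cycle $(1)\Rightarrow(2)\Rightarrow(3)\Rightarrow(4)\Rightarrow(5)\Rightarrow(1)$, with the substantive step $(1)\Rightarrow(2)$ handled by induction on an evaluation tree to show that $h(u)$ depends only on $\mathrm{alpha}(u)$ (the paper writes the common value as the product $h(a_1)\productoper\cdots\productoper h(a_n)$ rather than $\shuffleoper{\{h(a):a\in\mathrm{alpha}(u)\}}$, but these coincide by shuffle-triviality), and the remaining implications are routine exactly as you outline.

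Two small slips to fix, neither structural. In $(2)\Rightarrow(3)$, your sentence $\bigwedge_{b\in B}\exists x\,b(x)\wedge\bigwedge_{a\notin B}\neg\exists x\,a(x)$ defines $\{u:\mathrm{alpha}(u)=B\}$, not $\words B$; the correct \foone\ sentence for $\words B$ is simply $\bigwedge_{a\notin B}\neg\exists x\,a(x)$ (equivalently $\forall x\,\bigvee_{a\in B}a(x)$, which is what the paper uses). And in your boolean-combination expression for $\{u:\mathrm{alpha}(u)=B\}$, the inner intersection should range over $a\in B$ (asserting each $a\in B$ occurs), not $a\in\al\setminus B$; the paper writes this as $\words B\setminus\bigcup_{b\in B}\words{(B\setminus\{b\})}$.
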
	
\begin{proof}\hfill

	\noindent ($1 \Rightarrow 2$) Let $L$ be "recognized" by $h \colon \words{\al} \to \monoid$
	where $\monoid = \algebraOf \monoidset$ is a "shuffle-trivial" \circalgebra. Note that, by \circalgebra\ axioms, for any $m \in
	M$, we have $\shuffleoper m \productoper \shuffleoper m = \shuffleoper m$. If $m$
	is a "shuffle-idempotent", that is if $\shuffleoper m = m$, we get that $m = m
	\productoper m$. Thus a "shuffle-idempotent" is necessarily an "idempotent". So $\monoid$
	is a "commutative", "shuffle-idempotent" (and hence "idempotent") \circalgebra, meaning
	its every element is a "shuffle-idempotent" (and hence an "idempotent").

	Consider an arbitrary "word" $u \in \words{\al}$ and
	let $\letters(u) \subseteq \al$ be the set of "letters" in the word $u$.  We show
	that $h(u) = h(a_1) \productoper \ldots \productoper h(a_n)$ where $\{a_1, \ldots,
	a_n\} = \letters(u)$. For the empty word, it is trivially true.  For any non-empty
	word, the proof uses "evaluation trees" introduced in Section~\ref{sec:prelims}.
	Let $\mathcal{T} = (T,h)$ be an "evaluation tree" over 
	$u$. We show by induction on the tree that $h(v) = h(a_1) \productoper \ldots 
	\productoper h(a_n)$ where $\{a_1, \ldots, a_n\} = \letters(v)$ 
	for all nodes $v$ in the tree. Consider a node $v$ of the tree.
	\begin{enumerate}
	\item Case $v$ is a letter: The induction hypothesis clearly holds.
	\item Case $v$ is a concatenation of words $v_1$ and $v_2$: This is same as in the
		classical finite word case. The induction hypothesis holds since it holds
		for both $v_1$ and $v_2$, and since all elements of $\monoid$ are
		"idempotents", and "commutative".
	\item Case $v$ is an omega sequence of words $\langle v_1, v_2, \dots \rangle$
		such that there exists an $e \in M$ and $h(v_i) = e$ for all $i \geq 1$
		and $h(v) = \omegaoper e$. Since from the axioms of \circalgebra\ 
		$\shuffleoper e = \omegaoper {(\shuffleoper e)}$ we have $h(v) = e$.
		Clearly there is a $k \geq 1$ such that $\letters(v_1v_2\dots v_k) =
		\letters(v)$ and therefore it suffices to show that $h(v_1v_2 \dots v_k) =
		h(v)$. This is true, since $e$ is an idempotent $h(v_1v_2\dots v_k) =
		h(v_1) \productoper h(v_2) \productoper \dots \productoper h(v_k) = e$.  
	 \item Case $v$ is an omega$^*$ sequence of words: This is symmetric to the case
		 above. The induction hypothesis follows from the following axiom of
		 \circmonoid: $\shuffleoper e = \omegastaroper {(\shuffleoper e)}$
	 \item Case $v$ is a "perfect shuffle" such that $h(v) = \shuffleoper{
		 \{b_1,\dots,b_k\}}$. By the shuffle-trivial property, we have $h(v) = b_1
		 \productoper \dots \productoper b_k$. Let $v = \prod_{i \in \rationals}
		 v_i$ where $h(v_i) \in \{b_1,\dots,b_k\}$. By induction hypothesis
		 $h(v_i) = h(a_1^i) \productoper \ldots \productoper h(a_n^i)$ where
		 $\letters(v_i) = \{a_1^i, \ldots, a_n^i\}$. Let $l \geq k$ and $j_1,j_2,
		 \dots,j_l \in \rationals$ be such that we get the following: $\{h(v_{j_1}),
		 h(v_{j_2}),\dots,h(v_{j_l})\} = \{b_1,\dots,b_k\}$ and
		 $\letters(v_{j_1}\dots v_{j_l}) = \letters(v)$.  Let $w = v_{j_1}\dots
		 v_{j_l}$. Since elements of $\monoid$ are "commutative" and "idempotents", $h(v)
		 = h(w) = h(v_{j_1}) \productoper \dots \productoper h(v_{j_l})$.  This
		 shows that the induction hypothesis also holds in this case, as it
		 reduces to the finite concatenation case.
	\end{enumerate}
%
%
	The induction hypothesis, therefore, holds for any word $u \in \words A$.
	So $L$ is union of equivalence classes defined by the
	finite index relation $\{(u,v) \mid \letters(u) = \letters(v)\}$. All these
	classes are boolean combination of "languages" of the form $\words{B}$ for some $B
	\subseteq \al$, as seen below.
	\[ \{u \mid \letters(u)=B \} = \words B \setminus \left ( \bigcup_{b \in B}
	\words{(B \setminus \{b\})} \right ) \]
	($2 \Rightarrow 3$) Note that $\words{B}$ is expressed by the $\foone$ formula
	$\forall x \lor_{a \in B} a(x)$. The claim follows from boolean closure of
	$\foone$.

	\noindent ($3 \Rightarrow 4$) Due to the restriction of a single variable, any formula
	$\phi(x)$ is a boolean combination of atomic "letter" predicates.  Since a
	position in a word can have exactly one "letter", any non-trivial formula $\phi(x)$ is a
	disjunction of "letter" predicates, e.g. $a(x) \lor b(x)$. A "language" defined by the
	sentence $\exists x ~(a(x) \lor
	b(x))$ is recognized by the \circmonoid\ \uone\ via $h \colon \al \to \uone$ that
	maps $a,b$ to $0 \in \uone$ and every other letter to $\unit \in \uone$. A
	"language" defined by boolean combination of such sentences can be "recognized" by
	direct products of \uone.

	\noindent ($4 \Rightarrow 5$)   The syntactic \circmonoid\ of
	$L$ "divides" any \circmonoid\ that "recognizes" $L$; so it "divides" a direct product
	of finitely many \uone.  It is almost trivially verified that \circmonoid\ 
	\uone\ is "commutative" and "shuffle-trivial".  Since these properties are
	equivalent to a set of "identities" and "identities" are preserved under direct
	product and "division", we get
	that the syntactic \circmonoid\ of $L$ is "commutative" and "shuffle-trivial".

	\noindent ($5 \Rightarrow 1$) The syntactic \circalgebra\ of $L$ is finite because
	$L$ is a "regular language". Also, it is "commutative" and "shuffle-trivial" by
	assumption, and a "language" is always recognized by its syntactic \circalgebra. 
	So this direction trivially holds.
\qed \end{proof}

We point out an interesting connection of the above result to the "block product" based
characterizations from \cite{DBLP:conf/lics/AdsulSS19}. As shown there, \fothree
(resp. \fotwo)
are characterized by strongly (resp. weakly)
iterated "block products" of copies of $\uone$. In the same spirit, $\foone$ is
characterized by direct-products of copies of $\uone$.

\subsection{Boolean closure of existential \fo}
\label{subsec:simon}
\AP
Let us first recall the characterization of $\bc$ - the boolean closure of
existential \fo\ over finite words. This is precisely the content of the theorem due
to Simon \cite{simon}. The usual presentation of "Simon's theorem" refers to
""piecewise testable languages"" which are easily seen to be equivalent to
$\bc$-definable languages.
""Simon's theorem"" states that a "regular language" of finite words is
$\bc$-definable iff its syntactic monoid is "$J$-trivial". 
We recall that a monoid $M$ is ""$J$-trivial"" if and only if for all $m, n \in M$, $MmM = MnM$ implies $m=n$. In short, the Green's equivalence relation $J$ on $M$ is the equality relation.
We refer to \cite{MPRI-notes} for a detailed study of Green's relations and its use
in the proof of "Simon's theorem". 

The original proof of "Simon's theorem" uses the congruence $\sim_n$, parametrized
by $n \in \nat$, on finite words $\finitewords \alphabet$: for $u, v \in \Sigma^*$,
$u \sim_n v$ if $u$ and $v$ have the same set of "subwords" of length less than or equal to $n$. 
Note that $\sim_n$ has finite index. 
It turns out that the finite quotient monoid
${\finitewords \alphabet}/\sim_n$ is "$J$-trivial". Furthermore, every finite "$J$-trivial" monoid 
is a quotient of the ${\finitewords \alphabet}/\sim_m$ for an appropriate choice of $\alphabet$ and $m \in
\nat$. It is known \cite{DBLP:journals/corr/StraubingW15} that a finite monoid 
$(M, \productoper)$ is "$J$-trivial" iff it satisfies the (profinite) "identities"\footnote{We denote the unique idempotent power of $m$ by $m^!$}: $x^! = x \productoper x^!  \mbox{~and~} (x \productoper y)^! = (y \productoper x)^!$.
See the proof of the following theorem in \cite[Theorem 3.13]{MPRI-notes}.
\begin{theorem}[Simon's theorem ~\cite{MPRI-notes}]
Let $\monoid = (M, \unit, \productoper)$ be a "$J$-trivial" monoid. Consider the morphism $h: \finitewords \alphabet \mapsto \monoid$. Then there exists an $n$ such that for all $u, v \in \finitewords \alphabet$, $u \sim_n v$ implies $h(u) = h(v)$.
\end{theorem}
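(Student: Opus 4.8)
The plan is to reduce the statement to a comparison in Green's $\gJ$-preorder on $\monoid$ and then prove that comparison by an induction that strips letters from the content of the words involved. Recall that $m \leqJ m'$ means $m \in \monoidset\, m'\, \monoidset$, and that $\monoid$ being $J$-trivial is exactly the assertion that $\leqJ$ is a partial order on $\monoidset$, with $\unit$ as its maximum, so that $m \gJ m'$ forces $m = m'$. For $w \in \finitewords{\alphabet}$ let $c(w) \subseteq \alphabet$ denote the \emph{content} of $w$, the set of letters occurring in it. The heart of the argument is the following \emph{Main Lemma}: there is an $n$, depending only on $\monoid$, such that for all $u, v \in \finitewords{\alphabet}$, if every subword of $u$ of length at most $n$ is a subword of $v$, then $h(v) \leqJ h(u)$. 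Granting this, the theorem is immediate: if $u \sim_n v$ then $u$ and $v$ have exactly the same subwords of length at most $n$, so the Main Lemma applied in both directions gives $h(u) \leqJ h(v)$ and $h(v) \leqJ h(u)$, and hence $h(u) = h(v)$ since $\gJ$ is trivial on $\monoid$.

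I would prove the Main Lemma by induction on the $\gJ$-depth of $\monoid$, the length of the longest strictly $\leqJ$-decreasing chain in $\monoidset$ (equivalently one can induct on $|\monoidset|$). The base case, $\monoid$ trivial, is immediate with $n = 1$. For the inductive step, comparing length-$1$ subwords already forces $c(u) = c(v) =: A$; if $A$ is empty then $u$ and $v$ are both the empty word, so assume $A \neq \emptyset$. Now invoke the canonical left-to-right factorization determined by $A$: let $a \in A$ be the letter whose first occurrence in $u$ comes latest as $u$ is read from the left, and write $u = u_0\, a\, u_1$, so that $a \notin c(u_0)$ and in fact $c(u_0) = A \setminus \{a\}$; factor $v = v_0\, a'\, v_1$ in the same way. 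One shows, for $n$ suitably large in terms of $|\alphabet|$, that $a = a'$ and that $u_0$ and $v_0$ still share all subwords up to a length at which the induction hypothesis applies to them, their common content $A \setminus \{a\}$ being strictly smaller, yielding $h(u_0) \gJ h(v_0)$. For the tails $u_1$ and $v_1$, whose content may still be all of $A$, a symmetric right-to-left factorization, together with the observation that $h(u_0)$ is built only from the letters of $A \setminus \{a\}$, confines the remaining comparison to a proper $\leqJ$-downset of $\monoid$, which has strictly smaller $\gJ$-depth, so the induction hypothesis applies there as well. Reassembling the three pieces, using the $J$-trivial identities $x^! = x \productoper x^!$ and $(x \productoper y)^! = (y \productoper x)^!$ to absorb and reorder repeated factors across the factorization boundaries, produces $h(v) \leqJ h(u)$, with the final $n$ obtained as a suitable maximum over the smaller monoids encountered along the way.

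The step I expect to be the main obstacle is the combinatorial bookkeeping inside this factorization: verifying that the distinguished letter $a$ and the relevant combinatorial features of $u_0$ and $u_1$ are genuinely invariant under the assumption that $u$ and $v$ have the same subwords up to length $n$, once $n$ is chosen large enough in terms of $|\monoidset|$ and $|\alphabet|$, and controlling how $n$ grows through the recursion so that it ends up depending only on $\monoid$. This is precisely the technical core of Simon's classical proof. Since the statement above is Simon's theorem for finite words verbatim, an equally legitimate route, and the one the surrounding text points to, is simply to cite the proof of \cite[Theorem 3.13]{MPRI-notes}.
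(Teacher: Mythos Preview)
The paper does not give its own proof of this statement: it is stated as Simon's theorem and the reader is referred to \cite[Theorem~3.13]{MPRI-notes} for the proof. Your proposal correctly identifies this at the end, and your sketched argument is indeed the classical route to Simon's theorem; there is nothing to compare against in the paper itself.
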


We fix $n \in \nat$ and work with $\sim_n$ defined on "countable words"
$\words{\Sigma}$: for $u, v \in \words{\Sigma}$,
$u \sim_n v$ if $u$ and $v$ have the same set of "subwords" of length less than or equal to $n$. 
It is immediate that $\sim_n$ is an equivalence
relation on $\words{\al}$ of finite index. We let $S_n = \words{\Sigma}/\sim_n$ denote
the finite set of $\sim_n$-equivalence classes. For a word $w$,  $[w]_n$ denotes the 
$\sim_n$-equivalence class which contains $w$. 
\begin{lemma} \label{lem:sn-algebra}
There is a natural well-defined product operation 
$\product:  \words{S_n} \to S_n$ as follows:
$\product \Big(\prod_{i \in \alpha} [w_i]_n \Big) = \left [ \prod_{i \in \alpha} w_i \right ]_n$.
This operation $\product$ satisfies the "generalized associativity" property. As a result,
$\mathbf{S_n} = (S_n, \unit=[\emptyword]_n, \pi)$ is a \circmonoid. 
\end{lemma}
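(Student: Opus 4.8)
The plan is to recognize $\mathbf{S_n}$ as the quotient of the free \circmonoid\ $\words{\Sigma}$ by the equivalence $\sim_n$, so that the only genuine content is the compatibility of $\sim_n$ with generalized concatenation; once that is established, generalized associativity of $\product$ on $S_n$ becomes a formal transfer of generalized associativity of concatenation through the surjection $q\colon\words{\Sigma}\to S_n$, $w\mapsto [w]_n$. Accordingly I would proceed in two steps: (i) prove well-definedness, i.e.\ that $\big[\prod_{i\in\alpha} w_i\big]_n$ depends only on the classes $[w_i]_n$ and not on the chosen representatives $w_i$; (ii) deduce generalized associativity together with the unit and single-letter conditions.

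For step (i), the statement to establish is: if $w_i\sim_n w_i'$ for every $i\in\alpha$, then $\prod_{i\in\alpha} w_i\sim_n\prod_{i\in\alpha} w_i'$. I would take an arbitrary subword $s$ of $\prod_{i\in\alpha} w_i$ with at most $n$ positions. Since each position of $\prod_{i\in\alpha} w_i$ lies in a unique factor, $s$ meets only finitely many factors, say those indexed by $i_1<\cdots<i_k$ with $k\le n$; and since every position coming from an earlier factor precedes every position coming from a later one, $s$ splits as $s=s_1\cdots s_k$ where each $s_j$ is a nonempty subword of $w_{i_j}$ of length at most $n$. As $w_{i_j}\sim_n w_{i_j}'$, each $s_j$ is also a subword of $w_{i_j}'$; embedding these order-preserving, label-preserving maps inside the corresponding factors of $\prod_{i\in\alpha} w_i'$ (and simply ignoring the untouched factors) exhibits $s$ as a subword of $\prod_{i\in\alpha} w_i'$. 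By symmetry the two words have exactly the same subwords of length $\le n$, hence are $\sim_n$-equivalent. I expect this to be the main --- indeed the only non-routine --- step; it mirrors the classical finite-word argument, and the one point needing care is that the length bound on $s$ is precisely what forces only finitely many factors to be involved, which is what lets that argument survive in the transfinite setting.

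For step (ii), given words $U_i\in\words{S_n}$ ($i\in\alpha$) I would choose a representative in $\words{\Sigma}$ for each letter occurring in each $U_i$, writing $U_i=\prod_{j\in\beta_i}[w_{i,j}]_n$. Flattening the double index via generalized associativity of generalized concatenation over the alphabet $S_n$ gives $\prod_{i\in\alpha} U_i=\prod_{(i,j)}[w_{i,j}]_n$ (the index ranging over the generalized sum $\sum_{i\in\alpha}\beta_i$), so $\product\big(\prod_{i\in\alpha} U_i\big)=\big[\prod_{(i,j)} w_{i,j}\big]_n$ by the definition of $\product$. On the other hand $\product(U_i)=\big[\prod_{j\in\beta_i} w_{i,j}\big]_n$, and applying the definition once more together with generalized associativity of generalized concatenation over $\Sigma$ yields $\product\big(\prod_{i\in\alpha}\product(U_i)\big)=\big[\prod_{i\in\alpha}\prod_{j\in\beta_i} w_{i,j}\big]_n=\big[\prod_{(i,j)} w_{i,j}\big]_n$, which agrees with the previous expression. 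Finally $\product$ is the identity on one-letter words (the singleton word $[w]_n$ concatenates to $w$) and $\product(\emptyword)=[\emptyword]_n$, so $\mathbf{S_n}=(S_n,[\emptyword]_n,\product)$ satisfies the axioms of a \circmonoid. Everything in this step is bookkeeping pushed through the surjection $q$, contingent only on the well-definedness proved in step (i).
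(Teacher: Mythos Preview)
Your proposal is correct and follows essentially the same approach as the paper: for well-definedness you split a length-$\le n$ subword according to the finitely many factors it meets and use $w_{i_j}\sim_n w_{i_j}'$ on each piece, and for generalized associativity you pick representatives, flatten the double index via generalized associativity of concatenation, and compare. Your write-up is in fact slightly more thorough (you explicitly invoke symmetry and check the single-letter and unit axioms), but the substance is identical.
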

\begin{proof}
Let $w = \prod_{i \in \alpha} w_i$ and $w' = \prod_{i \in \alpha} w'_i$ where $w_i \sim_n
w'_i$ for all $i \in \alpha$. To prove $\pi$ is well-defined, we show that 
$\pi(w) = \pi(w')$. It suffices to show that all subwords of $w$ of length less than or
equal to $n \in \nat$ are also subwords of $w'$. 

Consider an arbitrary subword $u$ of $w$ that is of length less than or equal to $n$.  If
$u$ is the empty subword, it also is a subword of $w'$.  Otherwise, we can break $u$ into
non-empty factors $u=u_1u_2\dots u_k$  where $k \leq n$ and $u_j$ (for $1 \leq j \leq k$)
is a subword of $w_{i_j}$ ($i_j < i_{j'}$ whenever $j < j'$). Since $w_{i_j} \sim_n
w'_{i_{j}}$ and $|u_j| \leq |u| \leq n$, we have $u$ is a subword of $w'$ as well.
Therefore, $\pi$ is well-defined.

Now we turn to the issue of proving "generalized associativity" property of $\pi$. Let $u =
\prod_{i \in \alpha} u_i$ where $u_i = \prod_{j \in \alpha_i} [v_j]_n$ and $\alpha$ is any
countable "linear ordering". We have $\pi(u_i) = [\prod_{j \in \alpha_i} v_j]_n$ and hence 
\[ \pi(\prod_{i \in \alpha} \pi(u_i)) 
	= \left [ \prod_{i \in \alpha} (\prod_{j \in \alpha_i} v_j) \right ]_n =\pi(u)
\] This completes the proof. 
\qed \end{proof}

Note that the lemma implies that $h_n: \words{\Sigma} \to \mathbf{S_n}$ mapping $w$ to
$[w]_n$ is a "morphism" of $\circmonoids$.

\begin{lemma}\label{lem:fin}
Every "countable word" $u$ has a finite "subword" $\widehat{u}$ such that $u \sim_n \widehat{u}$.
\end{lemma}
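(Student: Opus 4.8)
The plan is to show that any countable word $u$ is $\sim_n$-equivalent to one of its finite subwords by a greedy selection argument: we want to pick out finitely many positions of $u$ that already witness every subword of length $\le n$ that occurs in $u$. Since $\Sigma$ is finite and $n$ is fixed, the set $P$ of words of length $\le n$ over $\Sigma$ is finite, and for each $p \in P$ that occurs as a subword of $u$ we fix one embedding, i.e. a finite set of positions $I_p \subseteq \dom u$ realising $p$. Then $I = \bigcup_{p} I_p$ is a finite subset of $\dom u$, and we set $\widehat u = \subword u I$, which is a finite word (its domain is a finite subordering of $\dom u$, hence an element of $\finitewords\Sigma$ up to isomorphism).

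The remaining task is to verify $u \sim_n \widehat u$, i.e. that $u$ and $\widehat u$ have exactly the same subwords of length $\le n$. One inclusion is immediate: every subword of $\widehat u$ is a subword of $u$, since $\widehat u$ is itself a subword of $u$ and the subword relation is transitive (a subword of a subword is a subword — this is clear since subwords are just restrictions to arbitrary subsets of the domain, and the order is inherited). For the other inclusion, let $p$ be a subword of $u$ with $|p| \le n$; then $p \in P$ and $p$ occurs in $u$, so by construction $I_p \subseteq I$ was chosen to realise $p$, and since the order on $I$ is induced from $\dom u$, the positions of $I_p$ still spell out $p$ inside $\widehat u = \subword u I$. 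Hence $p$ is a subword of $\widehat u$.

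I expect the only genuinely delicate point to be the bookkeeping in the forward inclusion: one must be careful that restricting to the \emph{union} $I$ of the witness sets does not destroy the individual witnesses. This is true because, for any $I_p \subseteq I \subseteq \dom u$, the order on $I_p$ induced from $\dom u$ coincides with the order on $I_p$ induced from $I$; so the occurrence of $p$ witnessed by $I_p$ survives verbatim in $\subword u I$. With that observation the argument is routine, and since $|I| \le |P| \cdot n$ is finite, $\widehat u$ is a finite word as required.
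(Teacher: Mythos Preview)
Your proposal is correct and follows essentially the same approach as the paper: fix one witnessing set of positions for each subword of length at most $n$ that occurs in $u$, take the (finite) union of these sets, and let $\widehat u$ be the restriction of $u$ to this union. The paper's own proof is terser and omits the verification of $u \sim_n \widehat u$ that you spell out, but the construction is identical.
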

\begin{proof}
Let $u$ be an arbitrary "countable word" and let $W$ be the set of "subwords" of length $n$ or less in $u$. For a $v \in W$ identify one set $X_v \subseteq \dom u$ such that $v = \subword{u}{X_v}$. Let $X = \cup_{v \in W} X_v$ and $\widehat u = \subword{u}{X}$ the "subword" corresponding to $u$ restricted to the finite set of positions in $X$. Then, $u \sim_n \widehat u$.
%
\qed \end{proof}

\AP
We say that a \circalgebra\ is ""shuffle-power-trivial"" if it satisfies 
the (profinite) "identity": 
\[\shuffleoper{\{x_1, \ldots, x_p\}} = (x_1 \productoper x_2 \productoper \ldots
\productoper x_p)^!\]

Note that, every "idempotent" of such a \circalgebra\ is a "shuffle-idempotent":
$x^! = x$ implies $\shuffleoper{x}=x$. 
Moreover it satisfies the "identities" for "$J$-trivial": 
$(x \productoper y)^! = \shuffleoper{\{x,y\}} = \shuffleoper{\{y,x\}} = (y \productoper x)^!$ and $x^! = x \productoper x^!$ follows 
from the axioms of \circalgebra\ and $x^! = \shuffleoper x = \shuffleoper{(\shuffleoper x)} = \omegaoper{(\shuffleoper x)} = \omegaoper{(x^!)} = \omegaoper{x} = x \productoper \omegaoper x$. It follows that "shuffle-power-trivial" \circalgebras\ are aperiodic.
%
%
%
\begin{theorem} \label{thm:simon-clo}
	Let $L \subseteq \words{\al}$ be a "regular language". The following are equivalent.
\begin{enumerate}
	\item $L$ is "recognized" by a finite "shuffle-power-trivial" \circalgebra.
	\item $L$ is "recognized" by the quotient "morphism" $h_n:\words{\Sigma} \to  \mathbf{S_n}$ for some $n$. 
	\item $L$ is definable in $\bc$.
	\item The syntactic \circalgebra\ of $L$ is "shuffle-power-trivial".
\end{enumerate}
\end{theorem}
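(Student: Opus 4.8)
The plan is to establish the equivalences via $(2)\Leftrightarrow(3)$, $(2)\Rightarrow(1)$, $(1)\Leftrightarrow(4)$, and the substantial implication $(1)\Rightarrow(2)$; only the last is real work. For $(3)\Rightarrow(2)$, recall that over "words" a quantifier-free formula in $k$ variables is a Boolean combination of order-atoms and "letter"-atoms, so an existential sentence with $k$ variables is a finite disjunction of statements ``$w$ is a "subword"'' with $|w|\le k$; hence any $\bc$-sentence is a Boolean combination of finitely many such statements, and taking $n$ to be the largest arity occurring, it defines a union of $\sim_n$-classes, i.e.\ a "language" "recognized" by $h_n$. For $(2)\Rightarrow(3)$, membership in a fixed $\sim_n$-class is determined by which words of length $\le n$ are "subwords", and ``$b_1\cdots b_m$ is a "subword" of $u$'' is expressed by the existential sentence $\exists x_1\cdots\exists x_m\bigl(\bigwedge_{j<m}x_j<x_{j+1}\wedge\bigwedge_j b_j(x_j)\bigr)$; thus every $\sim_n$-class, and every union of them, is $\bc$-definable. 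For $(4)\Rightarrow(1)$ note the syntactic \circalgebra\ is finite (as $L$ is "regular") and "recognizes" $L$; for $(1)\Rightarrow(4)$, the syntactic \circalgebra\ "divides" any \circalgebra\ recognizing $L$, and ``"shuffle-power-trivial"'' is an equational (profinite) class, hence closed under "division" and "direct products".

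For $(2)\Rightarrow(1)$ it suffices to check that the \circalgebra\ induced by $\mathbf{S_n}$ (finite by Lemma~\ref{lem:sn-algebra} and the finite index of $\sim_n$) is "shuffle-power-trivial", since it "recognizes" $L$. Writing each element as $[w]_n$, we have $\shuffleoper{\{[w_1]_n,\ldots,[w_p]_n\}}=[\shuffleword{\{w_1,\ldots,w_p\}}]_n$, and the key combinatorial observation is that for every $k\ge n$ the words $\shuffleword{\{w_1,\ldots,w_p\}}$ and $(w_1\cdots w_p)^k$ have exactly the same "subwords" of length $\le n$: both sets equal $\{\,u_1\cdots u_m : m\le n,\ \sum_j|u_j|\le n,\ \text{each }u_j\text{ a nonempty "subword" of some }w_{i_j}\,\}$ with arbitrary indices $i_j$ (using density of the "perfect shuffle" on one side, and the fact that $k\ge n$ blocks suffice to realise an arbitrary ordering of at most $n$ choices on the other). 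Hence, setting $z=[w_1]_n\productoper\cdots\productoper[w_p]_n$, we get $[\shuffleword{\{w_1,\ldots,w_p\}}]_n=z^k$ in the "monoid" $(S_n,\productoper)$ for all $k\ge n$; this forces $z^n$ to be "idempotent", so $\shuffleoper{\{[w_1]_n,\ldots,[w_p]_n\}}=z^n=z^!$, which is the required "identity".

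The heart is $(1)\Rightarrow(2)$. Let $h\colon\words{\al}\to\monoid$ "recognize" $L$ with $\monoid$ finite "shuffle-power-trivial". From the defining "identity" one first extracts the \emph{collapse} equalities $\omegaoper x=\omegastaroper x=\shuffleoper x=x^!$ and $\shuffleoper E=(\prod E)^!$ (exactly as in the paragraph preceding the theorem); in particular $\omegaoper e=\omegastaroper e=e$ for "idempotent" $e$, and $\monoid$ is "aperiodic". Since $(M,\productoper)$ is then a "$J$-trivial" "monoid" and $h$ restricted to $\finitewords{\al}$ is a "monoid" "morphism" into it, the classical "Simon's theorem" gives $n_0$ with $h(p)=h(q)$ for all finite words $p\sim_{n_0}q$. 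I would then promote this to "countable words": show there is $n\ge n_0$ such that every "countable word" $u$ admits a finite "subword" $\widehat u$ with $u\sim_n\widehat u$ \emph{and} $h(u)=h(\widehat u)$. The mere existence of a finite $\widehat u$ with $u\sim_n\widehat u$ is Lemma~\ref{lem:fin}; the content is the value equality, which I would prove by induction over an "evaluation tree" of the image of $u$ under the "letter" map, repeatedly invoking the collapse identities to replace an $\omega$-, $\omega^*$- or shuffle-labelled subtree by a finite "concatenation" of finitely many of its children --- few enough that the "value" is unchanged (collapse identities, "aperiodicity") yet many enough that every "subword" of length $\le n$ of that subtree's "factor" survives. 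Granting this, if $u\sim_n v$ then $\widehat u\sim_n u\sim_n v\sim_n\widehat v$, hence $\widehat u\sim_{n_0}\widehat v$, hence $h(\widehat u)=h(\widehat v)$, hence $h(u)=h(v)$; so $L=h^{-1}(h(L))$ is a union of $\sim_n$-classes and is "recognized" by $h_n$. The delicate step --- and the one I expect to demand the most care --- is the simultaneous bookkeeping in this induction: choosing $n$ large enough relative to $|M|$ and to the nesting depth so that pruning preserves both the algebraic "value" and the entire set of short "subwords" of every node.
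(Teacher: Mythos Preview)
Your proposal is correct and matches the paper's proof in both architecture and detail. Your stated worry about choosing $n$ ``relative to the nesting depth'' is unnecessary: the paper takes $n=n_0$ (the Simon bound for the finite monoid $(M,\productoper)$), and the evaluation-tree induction goes through for that fixed $n$ at every depth, because at each $\omega$-, $\omega^*$- or shuffle-node the collapse identities (plus, in the shuffle case, the $J$-trivial identity $(xy)^!=(yx)^!$) let you pass to a \emph{finite} subword with the same $h$-value and the same set of length-$\le n$ subwords, and any two such finite representatives are then identified by the finite-word Simon theorem you already invoked---no growth of $n$ with tree depth is needed.
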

\begin{proof}\hfill

\noindent $(1 \Rightarrow 2)$ Let $L$ be recognized by $h \colon \words{\al} \to \monoid$
where $\monoid = \algebraOf \monoidset$ is a "shuffle-power-trivial"
\circalgebra.  Since the "identities" are preserved in
sub-\circalgebra, we can assume $h$ to be surjective.  Consider the restriction of $h$ to
the free monoid $\finitewords \al$ resulting in the induced monoid morphism, also denoted
$h$ by slight abuse of notation, $h \colon \finitewords \al \to (M, \unit, \productoper)$.  
By the "identities" of the $\circalgebra$ $\monoid$, this morphism is surjective.

By Simon's theorem, there exists $n \in \nat$, such that
$(M , \unit, \productoper)$ is a quotient of $\fwq\al n$
and $u \sim_n v$ implies $h(u) =
h(v)$ thereby creating the quotienting morphism defined by mapping $[u]_n \mapsto h(u)$.
This implies in our morphism $h \colon \words\al \to \monoid$, for all finite words $u,v$,
we have $u \sim_n v$ implies $h(u) = h(v)$. Note that by Lemma~\ref{lem:fin}, for every countable
word $w$ there exists a finite subword of it, $\hat{w}$ such that $w \sim_n \hat{w}$. We
now show that $h(w) = h(\hat{w})$. In the remainder of this proof, $\hat{u}$ for a
countable word $u$ denotes a subword of $u$ such that $u \sim_n \hat{u}$ where 
existence of $\hat{u}$ is ensured by Lemma~\ref{lem:fin}. 

If $w$ is the empty word, then $\hat{w}$ is also the empty word, and the property holds.
Otherwise, let $\mathcal{T} = (T,h)$ be an "evaluation tree" over $w$. We prove by
induction on the tree that for every node $v$ of the tree, $h(v) = h(\hat{v})$.
\begin{enumerate}
	\item Case $v$ is a letter: The induction hypothesis clearly holds by taking
		$\hat{v} = v$.
	\item Case $v$ is a concatenation of words $v_1$ and $v_2$: Note that
		$\hat{v_1} \sim_n v_1$ and $\hat{v_2} \sim_n v_2$ implies $\hat{v_1}
		\hat{v_2} \sim_n v_1v_2$ and also $\hat{v_1}\hat{v_2}$ is a finite subword
		of $v_1v_2$.  By induction hypothesis, $h(v_1) = h(\hat{v_1})$ and $h(v_2)
		= h(\hat{v_2})$. Hence $h(v_1v_2) = h(\hat{v_1}\hat{v_2})$. This proves
		the induction hypothesis holds in this case.

	\item Case $v$ is an omega sequence of words $\langle v_1, v_2, \dots \rangle$
		such that there exists an idempotent $e \in M$ and $h(v_i) = e$ for all 
		$i \geq 1$
		and $h(v) = \omegaoper e$. Since from the axioms of \circalgebra\ 
		$\shuffleoper e = \omegaoper {(\shuffleoper e)}$ we have $h(v) = e$.
		Because there are only finitely many words of length less than or equal to
		$n$, clearly there is a $k \geq 1$ such that $v_1v_2\dots v_k \sim_n v$. 
		Let us denote $v_1v_2\dots v_k$ by $v'$. Note that since $e$ is an
		idempotent, $h(v') = e = h(v)$.
		Also by the induction hypothesis and the concatenation case already seen
		above, $h(v') = h(\hat{v'})$. By transitivity, $\hat{v'}$ is a finite
		subword of $v$ that is $\sim_n$ equivalent to it, and $h(v) = h(v') =
		h(\hat{v'})$. This proves the induction hypothesis for this case.

	 \item Case $v$ is an omega$^*$ sequence of words: This is symmetric to the case
		 above. The induction hypothesis follows from the following axiom of
		 \circmonoid: $\shuffleoper e = \omegastaroper {(\shuffleoper e)}$
	 \item Case $v$ is a "perfect shuffle" such that $h(v) = \shuffleoper{
		 \{b_1,\dots,b_k\}}$. By the "shuffle-power-trivial" property, we have 
		 $h(v) = (b_1 \productoper \dots \productoper b_k)^!$. Let $v = \prod_{i \in \rationals}
		 v_i$ where $h(v_i) \in \{b_1,\dots,b_k\}$. By induction hypothesis
		 $h(v_i) = h(\hat{v_1})$.  Since there are only finitely many words of
		 length less than or equal to $n$, there exists $l \geq k$ and $j_1,j_2,
		 \dots,j_l \in \rationals$ such that we get the following: $\{h(v_{j_1}),
		 h(v_{j_2}),\dots,h(v_{j_l})\} = \{b_1,\dots,b_k\}$ and
		 $v_{j_1}\dots v_{j_l} \sim_n v$.  Let $w = v_{j_1}\dots
		 v_{j_l}$.  Let $w^m$ be a finite power of $w$ such that $h(w^m) =
		 h(w)^!$.  Since the $\circmonoid$ satisfies the "$J$-trivial" identity $(x \productoper y)^! = (y \productoper x)^!$, 
		 we have that $h(w^m) = h(v)$ 
		 and $w^m$ is a finite subword of $v$ that
		 is $\sim_n$-equivalent to $v$. This
		 shows that the induction hypothesis also holds in this case.
 \end{enumerate}
Now for any two countable words $u$ and $v$, if $u \sim_n v$, then $h(u) = h(\hat{u}) =
h(\hat{v}) = h(v)$ where the middle equality is from the classical result of Simon
mentioned before.
Having shown that $u \sim_n v$ implies $h(u) = h(v)$, it now immediately follows that
$L$ is a boolean combination of $\sim_{n}$ equivalence classes and hence is recognized
by a morphism to $\cwq{\al}{n}$. 

%
%

$(2 \Rightarrow 1)$ Let $[x]_n J [y]_n$ for some $x,y \in \cw{\al}$.  There exists
$[u]_n, [v]_n, [u']_n, [v']_n$ such that $[u]_n \productoper [x]_n \productoper [v]_n = [y]_n$ and $[u']_n
\productoper [y]_n \productoper [v']_n = [x]_n$. In other words, $uxv \sim_n y$ and $u'yv' \sim_n x$.   
This implies $x \sim_n y$, that is, $[x]_n = [y]_n$. This proves $\cwq{\al}{n}$ is
$J$-trivial.

It is not difficult to see that $\shuffleword{\{u_1, \dots, u_p\}} \sim_n (u_1u_2 \dots
u_p)^n$. This means $\shuffleoper{\{[u_1]_n, \dots, [u_p]_n\}} = ([u_1]_n \dots
[u_p]_n)^!$.

$(2 \Rightarrow 3)$ Every equivalence class of $\sim_n$ is clearly definable in $\bc$.

$(3 \Rightarrow 2)$ Let $L$ be recognized by the formula $\alpha \defs \exists x_1, \dots,
x_n \varphi(x_1,\dots,x_n)$.  We show that for an $u \sim_n v$, $u \models \alpha$ if and
only if $v \models \alpha$. Consider an assignment $s$ which assigns the variables $x_i$s
to a position in the domain of $u$ such that $u, s \models \varphi$. Note that since
$\varphi$ is a quantifier free formula it is a boolean combination of formulas of the form
$x_i < x_j$, $x_i = x_j$ and $a(x_i)$. Let $X = \{s(x_i) \mid 1 \leq i \leq n\} \subseteq
\dom u$ be the set of $n$ points which are assigned to the $x_i$s. Since $u \sim_n v$,
there is a set $Y \subseteq \dom v$ of $n$ points such that $\subword{u}{X} =
\subword{v}{Y}$. Consider an assignment $\hat s$ to variables $x_i$ to positions in $Y$
such that $s(x_i) < s(x_j)$ iff $\hat s(x_i) < \hat s(x_j)$. Clearly such an assignment
satisfies $v, \hat s \models \varphi$ since the ordering between the variables and the
letter positions are preserved. Therefore we get that $u \models \alpha$ implies $v
\models \alpha$. A symmetric argument shows that other direction too. 

$(4 \Rightarrow 1)$ This is a trivial observation.

$(1 \Rightarrow 4)$ This follows from the fact that "identities" are preserved under  "division".
\qed \end{proof}

\subsection{Summary of FO subclasses}
\label{subsec:fo-summary}
We summarize the results of this section and known results from the literature. 

\begin{table}
\begin{tabular}{|c|c|c|}
\hline
Logic & Identity & Decomposition \\ \hline
$\foone$ & "shuffle-trivial" & $\times (U_1)$ \\
$\bc$ & "shuffle-power-trivial" & $\mathbf{S_n}$ \\
$\fotwo$ & $\ostar$-DA \cite{sav_fo2clo} & $wbp (U_1)$ \cite{DBLP:conf/lics/AdsulSS19} \\
$\fo$ & $\omegaoper x \productoper \omegastaroper x = x$ and shuffle simple \cite{colcombetCLO} & $bp(U_1)$ \cite{DBLP:conf/lics/AdsulSS19} \\ \hline
\end{tabular}
\end{table}


%
\section{First Order Logic with "infinitary quantifiers"}
\label{sec:fo-inf}
\AP
Our results in the previous section resemble very closely the corresponding results over
finite words.  
This can be attributed to the limited capability of the operators
$\algomegasymb$, $\algomegastarsymb$ and $\algshufflesymb$ in the \circmonoids\ we
witnessed.  
%
As mentioned in the Introduction, \fo\ cannot define the language of infinite number of $a$'s.
An existential quantifier is a threshold counting quantifier - it says there exists at least one position satisfying a property. 
Using multiple such first-order quantifiers, \fo\/ can count up to any finite constant but not more.
Over "countable words", it is natural to ask for stronger threshold quantifiers. We introduce natural infinite extensions of the existential quantifier. These quantifiers can distinguish ordinals in the infinite. 
\newcommand{\Int}{\mathbb{Z}}
\newcommand{\inford}[1]{\mathcal{I}_{#1}}

\AP
%
We define $\inford 0$ to be the set of all non-empty finite orderings. For any number $n \in \nat$, we define the set $\inford n$ to be the set of all "orderings" of the form $\sum_{i \in \Int} \alpha_i$ where $\alpha_i \in \inford{n-1} \cup \{\emptyword\}$ and is closed under finite sum. We define the "Infinitary rank" (or simply \intro{rank}) of a "linear ordering" $\alpha$ (denoted by $\intro[\irank]{}\irank \alpha$) as the least $n$ (if it exists) where $\alpha \in \inford n$. If there is no such $n$ we say that the rank is infinite. For example, $\irank \omega = \irank {\omega+\omega} = \irank {\omega^* + \omega} = 1$, $\irank {\omega^2} = \irank{\omega^2 + \omega^*} = 2$, and the "rank" of  rational numbers is infinite. 
%
%

\AP
We introduce the logic $\intro{\foinf}$ extending \fo\ with ""infinitary quantifiers""\intro[\ei]{}
: $\ei 0 x ~\varphi \mid \ei{1} x ~\varphi \mid \ldots \mid \ei{n} x ~\varphi 
\mid \ldots$ for all $n \in \nat$.

\setlength{\intextsep}{0pt} %
\setlength{\columnsep}{0pt}%
\begin{wrapfigure}{r}{3cm}
		\centering
		\begin{tikzpicture}
		\node at (0,0) {$\unit$};
		\draw[thick] (-0.3, -0.3) rectangle (0.3, 0.3);
		\node at (0,-0.7) {$0$};
		\draw[thick] (-0.3, -1) rectangle (0.3, -0.4);
		\end{tikzpicture}
\captionsetup{labelformat=empty}
	\caption*{$\ali 0$-chain}
\end{wrapfigure}
Note that all the variables are first order. The semantics of the "infinitary
quantifier" $\ei{n} x$ for an $n \geq 0$ is: for a word $w$ and an assignment $s$, we say $w,s \models
\ei{n} x~ \varphi$ if there exists a "subordering" $X \subseteq \dom{w}$ such that
$\irank{X} = n$  and $w, s[x=i] \models \varphi$ for
all $i \in X$.  For example, $\ei 0 x ~\varphi$ is equivalent to $\exists x ~\varphi$ since both formulas are true if and only if there 
is at least one satisfying assignment $x=s$. 

\begin{wrapfigure}{r}{3cm}
		\centering
		\begin{tikzpicture}
		\node at (0,0) {$\unit$};
		\draw[thick] (-0.3, -0.3) rectangle (0.3, 0.3);
		\node at (0,-0.7) {$0$};
		\draw[thick] (-0.3, -1) rectangle (0.3, -0.4);
		\node at (0,-1.4) {$1$};
		\draw[thick] (-0.3, -1.7) rectangle (0.3, -1.1);
		\node at (0,-1.9) [circle, fill, inner sep=1pt] {};
		\node at (0,-2.1) [circle, fill, inner sep=1pt] {};
		\node at (0,-2.6) {$n$};
		\draw[thick] (-0.3, -2.9) rectangle (0.3, -2.3);
		\end{tikzpicture}
	\caption{$\ali n$-chain}
	\label{fig:chain-algebra}
	\vspace{1cm}
\end{wrapfigure}

\AP
The logic $\intro[\foi]{}\foi{n}$ denote the fragment containing only the infinitary quantifiers $\ei{j} x$ for all $j \leq n$.
Clearly the following relationship is maintained among the logics: 
\[\fo = \foi 0 \subseteq \foi{1} \subseteq \foi{2} \subseteq\ldots \]
\AP
We also denote by $\intro[\foione]{}\foione n$ the corresponding one variable fragment of $\foi n$.
\begin{example}
	The formula $\ei{1}x~a(x)$ denotes the set of all countable words with infinitely
	many positions labelled $a$. Since $\fo$ cannot express this, it shows
	$\fo \subsetneq \foi{1}$.
\end{example}

\AP
For an $n \geq 0$, we define the \circalgebra\ $\intro[\ali]{}\ali{n}$-chains as: $\algebraOf {\{\unit, 0, 1, \dots, n \}}$ where 
for all $0 \leq i \leq j \leq n$, $i \productoper j = j \productoper i = \max(i,j)= j$ and for all  $0 \leq k < n$, $\omegaoper k = \omegastaroper k = k+1$ and $\omegaoper n = \omegastaroper n = n$. That is, $\omegaoper k = \omegastaroper k = \min(k+1,n)$
Moreover, $\shuffleoper \unit = \unit$ and $\shuffleoper S = n$ for any $S$ where $S \backslash \{\unit\} \neq \emptyset$. 
\[ \ali{n} \defs (\{\unit, 0, 1, \ldots, n\}, \{i,j\} \xmapsto[]{\productoper} \max(i,j), i \xmapsto[]{\algomegasymb} \min(i+1,n), i \xmapsto[]{\algomegastarsymb}\min(i+1,n), S \xmapsto[]{\algshufflesymb}n ) 
\footnote{As mentioned in the Preliminaries, we restrict the descriptions of "derived operators" to $\ali n \setminus \{\unit\}$} 
\] 
Note that the syntactic \circalgebra\ for the language defined by $\ei n x ~a(x)$ is $\ali n$.

\subsection{$\foinf$ with single variable}
\label{subsec:block-fo-inf}
\AP
In this section we show that languages recognized by $\ali{n}$ are definable in $\foione {n}$. It is easy to observe that the "direct product" of $\ali{n}$ recognize exactly those "languages" definable in the one variable fragment. 
\AP
\begin{theorem}\label{thm:singledelta}
"Languages" recognized by "direct product" of $\ali n$ are exactly those definable in $\foione n$.
\end{theorem}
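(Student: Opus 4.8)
The plan is to prove the two inclusions after reducing each to a question about a single copy of $\ali n$; the guiding point is that a one-variable formula can only test the letter carried by its free position. For $B \subseteq \al$ write $w|_B$ for the subword of $w$ on the positions whose letter lies in $B$, and let $\irank{w|_B}$ abbreviate the $\mathcal{I}$-rank of its domain. I would first establish a normal form: by induction on the structure of a $\foione n$ formula $\varphi(x)$ (with at most $x$ free), $\varphi(x)$ is equivalent to a boolean combination of atomic predicates $a(x)$ and sentences $\theta_{j,B} := \ei j x\, \bigvee_{a\in B} a(x)$ for $0 \le j \le n$ and $B \subseteq \al$. The only non-routine case is $\ei j x\,\psi$: writing $\psi$ (by the induction hypothesis and distributing) as $\bigvee_k (\chi_k(x) \wedge \rho_k)$ with $\chi_k$ a boolean combination of letter predicates and $\rho_k$ a sentence, one observes that over any fixed word $w$ the matrix $\psi$ holds exactly at the positions whose letter lies in $B_{K} := \bigcup\{ B_k : w \models \rho_k \}$ (where $B_k$ is the set of letters satisfying $\chi_k$), so $\ei j x\,\psi$ is equivalent to a boolean combination, over the possible truth assignments to the $\rho_k$, of the sentences $\theta_{j,B_K}$. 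Hence every $\foione n$ sentence is a boolean combination of the $\theta_{j,B}$. Its semantics is: $w \models \theta_{j,B}$ iff $w|_B \ne \emptyword$ and $\irank{w|_B} \ge j$, which follows from two elementary facts (both by induction on the rank): a non-empty ordering of $\mathcal{I}$-rank $\ge j$ has a subordering of $\mathcal{I}$-rank exactly $j$, and passing to a subordering never increases the $\mathcal{I}$-rank.

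For the direction ``$\foione n$-definable $\Rightarrow$ recognized by a direct product of $\ali n$'s'', since recognizability is closed under boolean combinations and direct products it is enough, by the normal form, to recognize each $\theta_{j,B}$. Let $g_B \colon \words{\al} \to \ali n$ send $a \mapsto 0$ for $a \in B$ and every other letter to $\unit$. By the special case $g = g_B$ of the Lemma stated below, $g_B(w) = \min(n,\ \irank{w|_B})$ when $w|_B \ne \emptyword$ and $g_B(w) = \unit$ otherwise, so $\theta_{j,B}$ is the preimage under $g_B$ of $\{j, j+1, \dots, n\}$ (of $\{n\}$ when $j = n$); hence $\theta_{j,B}$ is recognized by $\ali n$ — in fact by $\ali j$, which divides $\ali n$ via the surjective morphism $k \mapsto \min(k,j)$.

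For the converse, a language recognized by $h = (h_1,\dots,h_r) \colon \words{\al} \to \ali n^{\,r}$ is a finite union of intersections of languages $h_i^{-1}(v)$; as $\foione n$ is closed under boolean combinations it suffices to show that $g^{-1}(v)$ is $\foione n$-definable for a single morphism $g \colon \words{\al} \to \ali n$ and each $v \in \ali n$. The heart of the argument is the following Lemma: writing $S_\ell := g^{-1}(\{\ell, \ell+1, \dots, n\}) \subseteq \al$ for $0 \le \ell \le n$, for every word $w$ one has $g(w) = \unit$ iff $w|_{S_0} = \emptyword$, and otherwise
\[
  g(w) \;=\; \min\Big(\ n,\ \max\big\{\, \ell + \irank{w|_{S_\ell}} \ :\ 0 \le \ell \le n,\ w|_{S_\ell} \ne \emptyword \,\big\}\ \Big),
\]
with the convention that a finite non-empty ordering has $\mathcal{I}$-rank $0$. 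I would prove this by induction on "evaluation trees"; the leaf, concatenation, $\omega^*$ and shuffle cases use only that the $\mathcal{I}$-rank of a finite ordered sum is the maximum of the summands', that an ordering into which $\rationals$ embeds has infinite rank, and monotonicity of rank under suborderings, while the $\omega$-node is the crux: one proves that an $\omega$-indexed sum of infinitely many non-empty orderings of rank $\le m$ has rank $\le m+1$, and that this bound is attained, via a pigeonhole argument on which index $\ell$ realises the maximum in each of the $\omega$ many blocks. Granting the Lemma, ``$g(w) = v$'' unwinds into a boolean combination of conditions ``$\irank{w|_{S_\ell}} \ge j$'' for $0 \le \ell, j \le n$ — that is, of the sentences $\theta_{j, S_\ell}$ — so $g^{-1}(v)$ is $\foione n$-definable, and the theorem follows. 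The delicate point is exactly this $\omega$-node case: showing the rank rises by \emph{exactly} one while correctly handling the outer cap at $n$, the neutral element $\unit$, and the ``finite $=$ rank $0$'' convention; the arithmetic of the $\mathcal{I}$-rank under generalized sums is best isolated as a short preliminary lemma to keep the tree induction clean.
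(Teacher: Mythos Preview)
Your proposal is correct and follows essentially the same route as the paper: both directions hinge on the basic sentences $\theta_{j,B}=\ei{j}x\,\bigvee_{a\in B}a(x)$ and on an induction over evaluation trees that matches the value computed in $\ali n$ to the $\mathcal{I}$-rank of suitably restricted subwords. Your packaging is somewhat cleaner than the paper's in two respects: you spell out the one-variable normal form (the paper only treats the innermost quantifier-free matrix and tacitly assumes every $\foione n$ sentence reduces to a boolean combination of the $\theta_{j,B}$), and you isolate the evaluation-tree induction as a single closed-form Lemma $g(w)=\min(n,\max_\ell\{\ell+\irank{w|_{S_\ell}}\})$ rather than proving the upward-closed preimages $h^{-1}({\uparrow}m)$ one $m$ at a time. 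The paper in turn indexes by individual letters $a$ rather than by the level sets $S_\ell$, which is cosmetically different but equivalent. The delicate step you flag---that an $\omega$-sum of blocks of rank $\le r$, infinitely many of rank exactly $r$, has rank exactly $r+1$---is precisely what the paper also uses (equally tersely) in its $\omega$-node case, so isolating it as a preliminary lemma on $\mathcal{I}$-rank arithmetic is a sensible improvement.
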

\begin{proof}
\AP
 We first show that "languages" recognized by $\ali n$ are definable in $\foione {n}$. Let $h \colon \words{\al} \to \ali{n}$ be a "morphism". 
 It suffices to show that for any element $m \in \ali{n}$, $h^{-1}(m)$ is definable in $\foione {n}$. In the rest of the discussion we adopt the convention that $\unit < 0$. 
Let ${\uparrow}{m}$ denote the set $\{m' \mid m' \geq m\}$. Note that for an $m < n$, $h^{-1}(m) = h^{-1}({\uparrow}m) \setminus h^{-1}({\uparrow}(m+1))$ and $h^{-1}(n) = h^{-1}({\uparrow}n)$.  
Therefore, it is sufficient to show that $h^{-1}({\uparrow}m)$ is definable in $\foione {n}$. 
\AP
	For each $m \in \ali n$, we define the "language" $L(m)$ as \\
	\noindent $\big\{ w \mid $ there exists a letter $a$ in $w$ such
	that $h(a) = j \neq \unit$ and either $j \geq m$ or there is a set of positions $\alpha$ labelled $a$ such that $\irank{\alpha} = j'$ and 
	$j+j' \geq m \big\}$ \\
\AP
	\noindent The following $\foione {n}$ sentence defines the "language" $L(m)$. 
	\[
	\bigvee_{\substack{a \in \al,~ h(a) \geq m}} \exists x ~a(x) ~~\vee~ \bigvee_{\substack{a \in \al,~ 0 \leq h(a) < m}} \ei{m-h(a)} x ~a(x)
	\]
\AP
\noindent We show that $L(m) = h^{-1}({\uparrow}m)$ by induction on $m$. The base case holds since 
	${\uparrow}\unit = \ali{n}$, $h^{-1}({\uparrow}\unit) = \words \al$ and $L(\unit) = \words \al$. To prove the induction hypothesis assume the claim holds for all $j < m$. Consider a "word" $w$. By a second induction on the height of an "evaluation tree" $(T,h)$ for $w$ we show for all words $v \in T$, $v \in h^{-1}({\uparrow}m)$ if and only if $v \in L(m)$. In each of the following cases we assume that the children of the node (if they exist) satisfy the second induction hypothesis.
%

\begin{enumerate}
\item Case $v$ is a letter: The hypothesis clearly holds 
\item Case $v$ is a "concatenation" of two words $v_1$ and $v_2$: There are two cases to consider - $\{v_1,v_2\} \cap h^{-1}({\uparrow}m) \neq \emptyset$ or not. In the first case, let for an $i \in \{1,2\}$ we have $h(v_i) \geq m$ and $v_i \in L(m)$. Clearly $h(v) = h(v_1v_2) \geq m$ and $v \in L(m)$. For the second case, let us assume $h(v_1) = i$ and $h(v_2)=j$ such that $i \leq j < m$ and both $v_1,v_2 \notin L(m)$. From the definition of $\ali n$, it follows that $h(v) = h(v_1v_2) =j$. Let the $a$-labelled "suborderings" in $v_1$ and $v_2$ be $\alpha_1$ and $\alpha_2$ respectively where $\irank{\alpha_1} \leq \irank{\alpha_2} = j'$. It follows from the definition that $\irank{\alpha_1+\alpha_2} = j'$ and therefore $v \notin L(m)$. 
\item Case $v$ is an "omega" sequence of words $\langle v_1, v_2, \dots, \rangle$ such that $h(v_i) = k$, for all $i$, and $k$ is an "idempotent" (in $\ali n$ all elements are "idempotents"): Firstly, if $k \geq m$ and $v_i \in L(m)$ then clearly $h(v) \geq m$ and $v \in L(m)$. The non-trivial case is $k = m-1$. From the second induction hypothesis $v_i \notin L(m)$ for all $i$. From the definition of $\ali n$, $h(v) =\omegaoper k = m$. We need to show that $v \in L(m)$. By first induction hypothesis, each $v_i$ has a letter $a_i$ and an $a_i$-labelled set of positions $\alpha_i$ such that 
			 $h(a_i) + \irank{\alpha_i} = k$. Since $|\al|$ is finite, there is a letter $a$ occurring in "omega"
			many factors. Hence the $a$-labelled set of positions $\alpha$ in $v$ satisfies $h(a)
			+\irank{\alpha} = k+1$ or in other words $v \in L(m)$. 
\item Case $v$ is an "omega$^*$" sequence: This case is symmetric to the above case.
\item Case $v$ is a "perfect shuffle", $h(v) = \shuffleoper{S}$: It is easy to see that the induction hypothesis holds if $S=\{\unit\}$. So, assume $S \cap \{\unit\} \neq \emptyset$. Hence $h(v) = n$. Since, there are rational number of children $u$ where $h(u) \neq \unit$, there is a letter $a$ such that $a$-labelled set of positions in $v$ has infinite "rank" or $v \in L(n)$. 
\end{enumerate}
\AP

Now we give the proof of the other direction. Due to the restriction of a single variable, any quantifier free formula $\varphi(x)$ is a boolean combination of atomic letter predicates. Since any position has exactly one "letter", we can consider $\varphi(x)$ to be a disjunction of the "letter" predicates. Consider the formula $\alpha \defs \ei k x \vee_{a \in A} a(x)$. The language "defined" by $\alpha$ is "recognized" by the morphism $h: \words \alphabet \mapsto \ali n$ where $h(a) = 0$ for all $a \in A$ and $h(a) = \unit$ for all $a \notin A$ since $h^{-1}(\{k,k+1,\dots,n\})$ "recognizes" all words where the $A$-labelled positions have "rank" at least $k$. We conclude by observing that
"languages" defined by boolean combinations of such sentences can be "recognized" by "direct products" of $\ali n$.
\qed \end{proof}

\subsection{The general $\foinf$ logic}
\AP
In this section, we consider the full logic $\foi{n}$ and observe that they define exactly those "languages" "recognized" by block products of $\ali{n}$. 
\begin{theorem}\label{thm:fo-inf-n}
The "languages" defined by $\foi{n}$ are exactly those "recognized" by finite "block products" of $\ali{n}$. 
Moreover, the "languages" defined by $\foinf$ are exactly those "recognized" by finite "block products" of $\{\ali{n} \mid n \in \nat\}$.
\end{theorem}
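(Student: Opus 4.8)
The plan is to prove the first assertion by a double inclusion and then read off the second. Write $\mathcal{W}_n$ for the class of "languages" "recognized" by some finite "block product" of copies of $\ali n$. Two preliminary remarks make the induction go smoothly. First, $\mathcal{W}_n$ is closed under boolean combinations: if $L_i$ is "recognized" by a block product $P_i$ of $\ali n$'s, then any boolean combination of $L_1,L_2$ is "recognized" by $P_1\times P_2$, which "divides" $P_1\blockproduct P_2$, and the latter is again "recognized" by a block product of $\ali n$'s (by the reassociation properties of "block products" from \cite{DBLP:conf/lics/AdsulSS19}). Second, $\{\unit,n\}$ is a sub-\circalgebra\ of $\ali n$ isomorphic to $\uone$ (since $\omegaoper n=\omegastaroper n=n$ and $\shuffleoper S=n$ for nontrivial $S$), so every \fo-definable "language" — equivalently, by \cite{DBLP:conf/lics/AdsulSS19}, every "language" "recognized" by a block product of $\uone$'s — already lies in $\mathcal{W}_n$.

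For the inclusion $\foi n$-definable $\subseteq\mathcal{W}_n$, I would run a structural induction on formulas in \emph{pointed} form: for $\varphi(x_1,\dots,x_r)$ one shows that the associated "language" over $\al\times\{0,1\}^r$ coding an assignment of the free variables to distinct positions lies in $\mathcal{W}_n$. Atomic predicates and boolean connectives are immediate from the previous paragraph. The two quantifier cases are handled uniformly by attaching one more $\ali n$-factor through the block product construction of \cite{DBLP:conf/lics/AdsulSS19}: for $\exists y\,\psi$ this is the classical argument (only the sub-\circalgebra\ $\{\unit,n\}\cong\uone$ is used); for $\ei{k} y\,\psi$ with $k\le n$, given a block product $P$ "recognizing" the pointed "language" of $\psi(y)$, one builds $\ali n\blockproduct P'$ in which the transducer records, at every position, whether marking that position would place the word in the language of $\psi$, and then $\ali n$ applied to this stream of yes/no values computes exactly $\min(\irank Y,n)$ for $Y$ the set of yes-positions, so that the preimage of $\{k,k+1,\dots,n\}$ is precisely $\{w:w\models\ei{k} y\,\psi\}$. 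Correctness here uses only that a "linear ordering" has a "subordering" of "rank" exactly $k$ iff its "rank" is $\ge k$, together with the additivity of capped rank under "concatenation", omega-, omega$^*$- and "perfect shuffle"-products — the very computations already done in the proof of Theorem~\ref{thm:singledelta}.

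For the converse inclusion $\mathcal{W}_n\subseteq\foi n$-definable, I would induct on the number of factors and peel off the outermost one. The base case — a "language" "recognized" by a single $\ali n$ — is the first half of the proof of Theorem~\ref{thm:singledelta}, which in fact produces a $\foione n\subseteq\foi n$ definition (taking the extended alphabet when one works with pointed "languages"). For the step, if $L$ is "recognized" by $\ali n\blockproduct P$ with $P$ a block product of $\ali n$'s, the "block product principle" (Proposition~\ref{prop:bpp}) writes $L$ as a boolean combination of a "language" "recognized" by $\ali n$ and a "language" $\sigma^{-1}(L_2)$ with $L_2$ "recognized" by $P$ — hence $\foi n$-definable by induction — over the alphabet $\ali n\times\al\times\ali n$, where $\sigma$ relabels position $x$ by $(h_1(u_{<x}),u(x),h_1(u_{>x}))$ for a single morphism $h_1\colon\al\to\ali n$. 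It then suffices to rewrite each extended-alphabet atom $(m,a,m')(x)$ as $a(x)\wedge(h_1(u_{<x})=m)\wedge(h_1(u_{>x})=m')$, noting that $h_1(u_{<x})=m$ is the relativization to the prefix $\{y<x\}$ of the $\foi n$-sentence defining $h_1^{-1}(m)$ from the base case, and symmetrically for the suffix.

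The main obstacle is exactly this last substitution step: one must know that $\foi n$ is closed under relativization to a "cut" $\{y<x\}$ (and to $\{y>x\}$). For ordinary quantifiers this is routine guarding, but one must check that an "infinitary quantifier" survives relativization, i.e.\ that $\ei{k} y\,(y<x\wedge\theta)$ still expresses a "rank" threshold, now read on the sub-order of prefix-positions satisfying $\theta$; this holds by the semantics of $\ei{k}$, and is where carrying a free variable through the induction genuinely matters. Granting the first assertion, the "moreover" clause is immediate. A $\foinf$-sentence uses only finitely many quantifiers $\ei{k}$, say with largest index $n$, so it is a $\foi n$-sentence and its "language" is "recognized" by a block product of $\ali n$'s, a fortiori by a block product drawn from $\{\ali n\mid n\in\nat\}$. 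Conversely, any block product of members of $\{\ali n\mid n\in\nat\}$ uses finitely many of them, with largest index $n$, and since each $\ali k$ with $k\le n$ is a quotient of $\ali n$ via $j\mapsto\min(j,k)$, that block product "divides" a block product of $\ali n$'s and is therefore $\foi n$-definable, hence $\foinf$-definable.
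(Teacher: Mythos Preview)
Your approach is essentially the paper's: both directions are inductions with Theorem~\ref{thm:singledelta} as base case and the block product principle (Proposition~\ref{prop:bpp}) as the inductive engine, and your treatment of the ``moreover'' clause is exactly the intended one. One orientation slip to fix: in the paper's convention for $\monoid\blockproduct\monoidN$ the transducer is driven by the \emph{left} factor, so the product you describe for $\ei{k} y\,\psi$ (transduce via a recognizer $P$ of $\psi$, then feed the yes/no stream to $\ali n$) should be written $P\blockproduct\ali n$, not $\ali n\blockproduct P'$; correspondingly the paper peels the \emph{rightmost} $\ali n$ from $M\blockproduct\ali n$ in the converse direction, which dovetails with the left-associated iterated products just built and spares you the reassociation appeal.
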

\begin{proof}
We first show that "languages" "recognizable" by finite "block products" of $\ali n$ are definable in $\foi n$.
The proof is via induction on the number of $\ali{n}$ in an
	iterated "block product". The base case follows from Theorem~\ref{thm:singledelta}.

	For the inductive step, consider a "morphism" $h \colon \words{\al} \to M \blockproduct \ali{n}$.
	Let $h_1 \colon \words{\al} \to M$ be the induced morphism to $M$, and let $\sigma$
	be the associated transducer. By the "block product principle" (see Proposition \ref{prop:bpp}), any language
	"recognized" by $h$ is a boolean combination of "languages" $L_1 \subseteq \words{\al}$
	"recognized" by $M$ and $\sigma^{-1}(L_2)$ where $L_2 \subseteq \words{(M \times \al
	\times M)}$ is "recognized" by $\ali{n}$. By induction hypothesis, $L_1$ is
	$\foi{n}$ definable. By the base case $L_2$ is $\foi{n}$ definable but over
	the alphabet $M \times \al \times M$. To complete the proof, one needs to show for any
	word $w \in \words{\al}$ and assignment $s$, and for any $\foi{n}$ formula $\varphi$
	over the alphabet $M \times \al \times M$, there exists a $\foi{n}$ formula
	$\hat{\varphi}$ over the alphabet $\al$ such that $w, s \models \hat{\varphi}$ if
	and only if $\sigma(w) , s \models \varphi$.  For instance, suppose $\varphi =
	\ei{i}x~(m_1, c, m_2)(x)$, and inductively $\phi_{m_1}$ (resp. $\phi_{m_2}$) are
	$\foi{n}$ formula characterizing "words" over $\words{\al}$ that are mapped by $h_1$ to
	$m_1$ (resp. $m_2$). Then $\hat{\varphi}$ is $\ei{i}x~(\phi_{m_1}|_{<x} \land c(x)
	\land \phi_{m_2}|_{>x})$, where $\phi_{m_1}|_{<x}$ is the formula $\phi_{m_1}$
	with all its variables relativised to less than the variable $x$.  This way, one
	proves that $\sigma^{-1}(L_2)$ is $\foi{n}$ definable. This completes the proof of this direction.
	
We now show the other direction: a "language" defined by an arbitrary formula $\varphi \in \foi{n}$ is "recognized" by finite "block products" of $\ali n$.
For any $\foi{n}$ formula, we can naturally consider its
	models $w,s$ as words over extended alphabets; for each free variable an $1$ or $0$
	is added to the label of a position in $w$ depending on whether $s$ assigns the
	corresponding free variable to that position or not. The proof now goes via structure induction on the subformulas $\phi$ of $\varphi$.

	Case $\phi = a(x)$: Then $L(\phi) \subseteq \words{(A \times \{0,1\})}$ is
	the set of "words" with a unique position labeled $(a,1)$. This can be
	"recognized" by $U_1 \blockproduct U_1$ \cite{DBLP:conf/lics/AdsulSS19}. Since $U_1$ divides $\ali{n}$, the hypothesis holds.
 
 	Case $\phi = x <y$: This can also be "recognized" by "block products" of $U_1$ \cite{DBLP:conf/lics/AdsulSS19}. 

	Boolean combination of formula can be handled by "direct product" of inductively defined \circalgebras.

	Case $\phi = \ei{i}x ~\psi$ (for $i \leq n$): This is the non-trivial case. Let $L(\psi) \subseteq
	\words{(\al \times \{0,1\})}$ is inductively "recognized" by $h
	\colon \words{(\al \times \{0,1\})} \to M \in \bpc \ali{n}$, that is, there
	is a set $F \subseteq M$ such that $h^{-1}(F) = L(\psi)$. We prove that $M
	\blockproduct \ali{n}$ recognizes $L(\phi)$. Once again we use the "block product
	principle".  Consider two "morphisms" $g_1 \colon \words{\al} \to M$ and $g_2
	\colon \words{(M \times \al \times M)} \to \ali{n}$. Let $g_1(a) = h((a,0))$ and
	suppose $g_2((m_1,a,m_2))$ equals $0$ if $m_1 \productoper h((a,1)) \productoper m_2 \in F$, and
	it equals $\unit$ otherwise. Let $\sigma$ be the transducer corresponding to
	$g_1$. We show that $w \models \phi$ if and only if
	$g_2(\sigma(w)) \geq i$. This would imply $L(\phi) = \sigma^{-1}(g_2^{-1}(\{i,
	i+1, \ldots, n\}))$ and by the "block product principle", this is recognized by $M
	\blockproduct \ali{n}$. 

	Let $w \models \phi$. If $\alpha_\psi$ is the set of all positions
	of $w$ where $\psi$ is true, then $\irank{w_\psi} \geq i$. Let $l \in \alpha_{\psi}$ and $w(l)=a$.  
	We can split $w$ at the position $l$ as $w_1 a w_2$ and by logic semantics 
	$w_1^0 (a, 1) w_2^0 \models \psi$ (for any $u \in \words{\al}$, 
	we denote by $u^0$ the word over the same domain with $u^0[i] = (u[i], 0)$). If
	$h(w_1^0) =m_1$ and $h(w_2^0) = m_2$, then $m_1 \productoper h((a,1)) \productoper m_2 \in F$.
	Also, $\sigma(w)[l] = (m_1, a, m_2)$. So, $g_2$ maps every position $l \in
	\alpha_{\psi}$ to $0$, and hence $g_2(\sigma(w)) \geq i$. Conversely,
	suppose $g_2(\sigma(w)) \geq i$. Let $\alpha_0$ denote the positions of
	$\sigma(w)$ for which $g_2$ maps to $0$. Since $g_2$ maps
	each letter to $0$ or $\unit$, we get $\irank{\alpha_0} \geq i$. Let $l \in
	\alpha_0$. If $\sigma(w)(l) = (m_1, a, m_2)$, then
	$m_1 \productoper h((a,1))\productoper  m_2 \in F$. This means $\psi$ is true at position $l$ for $w$.
	Since $l$ is any position in $\alpha_0$, we have that $w \models \phi$. 
\qed \end{proof}
\AP
%
\newcommand{\cut}[1]{\hat #1}
\AP
We claim that both first order logic with "cuts" ($\intro{\focut}$) and weak monadic second order logic ($\intro{\wmso}$) can define the languages definable in $\foinf$.
\begin{theorem}\label{thm:foinf-focut-wmso}
	 $\foinf \subseteq \focut \cap \wmso$ \footnote{Here, $\foinf$, $\focut$, $\wmso$ denote the languages defined by the respective logic.} 
\end{theorem}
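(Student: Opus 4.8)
The plan is to prove both inclusions $\foinf\subseteq\focut$ and $\foinf\subseteq\wmso$ by a single induction on the structure of the $\foinf$-formula. Since \focut\ and \wmso\ both contain \fo, the Boolean cases and the ordinary quantifier carry over literally, so the only new case is a formula $\ei{n}x~\varphi$. The first step is to normalise its semantics: writing $P=\{\,i\in\dom{w}:w,s[x{:=}i]\models\varphi\,\}$, one has $w,s\models\ei{n}x~\varphi$ iff $\irank{P}\ge n$. This rests on two elementary facts about the $\inford{\cdot}$-hierarchy, each by a short induction on $n$: (i) $\irank{\cdot}$ is monotone under \kl{suborderings}, since a \kl{subordering} of a finite sum of $\Int$-sums of members of $\inford{n-1}\cup\{\emptyword\}$ is again of that form; and (ii) any linear ordering of rank $\ge n$ (including infinite rank) contains a \kl{subordering} of rank exactly $n$. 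Together these turn the existential quantification over suborderings $X$ with $\irank{X}=n$ into the threshold condition $\irank{P}\ge n$ (for $n=0$ this is just $P\neq\emptyset$).

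The heart of the proof is a structural characterisation of ``$\irank{P}\ge n$'' via iterated finite condensations. Given a linear ordering $P$, define a chain of equivalence relations $\equiv^{0}\subseteq\equiv^{1}\subseteq\cdots$ on $P$ by: $x\equiv^{0}y$ iff $x=y$; and $x\equiv^{k+1}y$ iff only finitely many $\equiv^{k}$-classes lie weakly between the $\equiv^{k}$-class of $x$ and that of $y$ -- equivalently, $\equiv^{k+1}$ is the finite-interval \kl{condensation} of the quotient order $P/{\equiv^{k}}$. The lemma to establish is: for $n\ge 1$, $\irank{P}\ge n$ iff $P$ has infinitely many $\equiv^{n-1}$-classes. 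This is a Hausdorff-style statement -- each finite-condensation step drops the rank of an infinite ordering by exactly one, and $\inford n$ is exactly the class of orderings that become finite after $n$ steps -- and I expect its proof to be the main obstacle, because the induction must absorb the ``closed under finite sum'' clause in the definition of $\inford n$ (which is what keeps, say, $\omega+\omega$ at rank $1$) and must handle the off-by-one at the bottom of the hierarchy (finite versus merely nonempty). The same analysis also supplies facts (i) and (ii) above.

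Granting the lemma, the two target inclusions are parallel inductions. For \wmso: build \wmso\ formulas $\theta_{k}(x,y)$ defining $x\equiv^{k}y$ for $0\le k\le n-1$, where $\theta_{0}$ is $x=y$ and $\theta_{k+1}$ asserts the existence of a finite set $F$ of positions meeting every $\equiv^{k}$-class that lies weakly between the $\equiv^{k}$-classes of $x$ and $y$ (the relation ``the class of $z$ lies weakly between the classes of $x$ and $y$'' being itself expressible from $\theta_{k}$); these finite-set quantifiers nest $k$ deep. Then for $n\ge 1$, ``$\irank{P}\ge n$'' becomes ``no finite set of positions is a transversal of all $\equiv^{n-1}$-classes'', i.e.\ $\neg\exists\,\text{finite }F\;\forall z\big(Pz\to\exists w\in F\;\theta_{n-1}(z,w)\big)$, where in the end the atom $P(\cdot)$ is expanded to the inductively obtained \wmso\ translation of $\varphi$. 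For \focut: mirror the construction with \kl{Dedekind cuts} in place of finite sets. Call a cut $c$ a \emph{$\equiv^{k}$-limit cut} if on one side of $c$ there is at least one $\equiv^{k}$-class and such classes occur arbitrarily close to $c$; this is definable from $\theta_{k}$. An \kl{interval} then contains finitely many $\equiv^{k}$-classes iff it contains no $\equiv^{k}$-limit cut, and $P$ has infinitely many $\equiv^{n-1}$-classes iff some $\equiv^{n-1}$-limit cut exists -- because an infinite linearly ordered set of classes carries an increasing or a decreasing $\omega$-chain whose supremum- or infimum-cut is a limit cut. All quantifiers involved range over positions and cuts, so the translation stays inside \focut.

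Alternatively one could route through the algebra: by Theorem~\ref{thm:fo-inf-n}, $\foinf$ is the class of languages recognised by finite \kl{block products} of the $\ali n$, so if \focut\ and \wmso\ are known to be closed under \kl{division}, \kl{direct products} and \kl{block products}, it suffices to verify that each $\ali n$ recognises only \focut- and \wmso-definable languages. Since $\ali n$ is the syntactic \circalgebra\ of the language of $\ei{n}x~a(x)$, this reduces once more to the definability of that single language in \focut\ and in \wmso, i.e.\ to exactly the condensation analysis above carried out for a one-letter predicate.
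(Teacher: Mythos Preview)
Your approach is correct but takes a genuinely different route from the paper. Both proofs reduce the key case $\ei{n}x\,\psi$ to expressing ``$\irank{P}\ge n$'' (for $P$ the set of $\psi$-positions) in each target logic, but the paper does this by writing down direct inductive formulas rather than going through condensations: for \wmso, $\Psi_n$ asserts that for every finite set $X$ of positions, one of the intervals it determines can be split into two pieces each satisfying $\Psi_{n-1}$; for \focut, $\Phi_n$ asserts the existence of a cut towards which an $\omega$- or $\omega^*$-sequence of disjoint factors each satisfying $\Phi_{n-1}$ accumulates. The correctness of $\Psi_n$ and $\Phi_n$ is then argued directly from the inductive definition of $\inford{n}$. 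Your route through iterated finite condensations $\equiv^{k}$ and the Hausdorff-style lemma ``$\irank{P}\ge n$ iff $P/{\equiv^{n-1}}$ is infinite'' is more conceptual and yields a uniform encoding in both logics (define $\theta_k$ once, then read off the threshold as ``no finite transversal'' in \wmso\ and ``some limit cut exists'' in \focut), at the price of having to prove that lemma carefully --- exactly the obstacle you flag. The paper's argument is shorter and more ad hoc; yours isolates a reusable structural fact about the rank hierarchy and makes the parallel between the \wmso\ and \focut\ translations more transparent. Your alternative algebraic route via Theorem~\ref{thm:fo-inf-n} and closure of \focut\ and \wmso\ under block products would also work, but it is not the path the paper takes.
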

\begin{proof}
We first show by structural induction that there is an equivalent $\wmso$ formula for any $\foinf$ formula. It is easy to observe that the hypothesis holds for the atomic case, first order quantification and boolean combinations. Let us consider the formula $\phi = \ei k x ~\psi(x)$. By our inductive hypothesis there is a $\wmso$ formula $\hat \psi(x)$ equivalent to $\psi(x)$. We show that the $\wmso$ formula $\Psi_k$ inductively defined is equivalent to $\phi$: Let $\Psi_0 \defs \exists x ~\hat \psi(x)$ and 
\begin{align*}
& \Psi_n \defs \text{``For any finite set $X = \{x_1,\dots,x_k\}$, one of the "intervals" $[-,x_1],\dots,$ } \\
& \text{$[x_i,x_{i+1}],\dots,[x_k,-]$ can be split into at least two parts each satisfying $\Psi_{n-1}$''}
\end{align*}
This can be expressed in $\wmso$ as follows (consec$(X,x,y)$ says that $x,y \in X$ and $x< y$ and there is no $z \in X$ such that $x<z< y$. That is $x$ and $y$ are consecutive in set $X$): 
\begin{align*}
\Psi_n \defs & \forall_{fin} X ~\Big( \exists x,y \in X ~\exists z (\text{consec}(X,x,y) \ifthen \Psi_{n-1}[>x,<z] \wedge \Psi_{n-1}[>z,<y]) ~\vee \\
& \exists z ~\big(\Psi_{n-1}[>z,< \text{min}(X)] \wedge \Psi_{n-1}[< z]\big) ~\vee~ \exists z ~\big(\Psi_{n-1}[> \text{max}(X), < z] \wedge \Psi_{n-1}[> z]\big) \Big)
\end{align*}
We claim that $\Psi_n$ satisfies all words where the $\psi$-labelled set of positions $\alpha$ has $\irank \alpha \geq n$. It is clearly true for the base case $\Psi_0$. Assume the hypothesis is true for all $j< n$. The formula $\Psi_n$ says that for any finite number of partitions $\alpha_1, \alpha_2, \dots, \alpha_k$ of the $\psi$-labelled set of positions $\alpha$, there is at least one $\alpha_i$ that can be split into two parts containing $\psi$-labelled set of positions $\alpha^1_i$ and $\alpha^2_i$ such that $\irank {\alpha^1_i} \geq n-1$ and $\irank {\alpha^2_i} \geq n-1$. In short, finite partitioning of $\psi$-labelled set of positions with rank $n-1$ is not possible or $\irank \alpha \geq n$. Therefore the formula $\Psi_k$ is equivalent to the formula $\phi$.

Next we give an $\focut$ formula equivalent to an $\foinf$ formula. Like in the previous proof, let us look at the case $\phi = \ei k x ~\psi(x)$ where $\psi(x)$ is equivalent to an $\focut$ formula $\hat \psi(x)$. We show $\phi$ is equivalent to $\Phi_k$ where $\Phi_n$ is inductively defined as: $\Phi_0 \defs \exists x ~\hat \psi(x)$ and $\Phi_n$ is 
\[
\text{``There is a "cut" towards which there is an "omega" (or "omega$^*$") sequence of "factors" satisying $\Phi_{n-1}$''}
\]
This can be written in $\focut$ as follows:
\begin{align*}
~\exists_{cut} \cut x ~\forall y < \cut x ~\exists z \in (y,\cut x)  ~\Phi_{n-1} [>y,<z] \big)  ~\vee ~\exists_{cut} \cut x ~\forall y > \cut x ~\exists z \in (\cut x, y) ~\Phi_{n-1}[>z,<y]
\end{align*}
The formula says there is an "omega" or "omega$^*$" sequence of $\psi$-labelled positions which are of "rank" $n-1$. 
We now argue that the inductively defined formulas $\Phi_n$ says there is a $\psi$-labelled set of positions $\alpha$ where $\irank \alpha \geq n$. The base is same as the previous case. Let us consider the formula $\Phi_n$. We argue that the first (resp. second) disjunct says there is an "omega" (resp. "omega$^*$") sequence of "domains" $\alpha_1, \alpha_2, \dots, $ such that each of the $\alpha_i$'s are $\psi$-labelled positions and $\irank {\alpha_i} \geq n-1$. The first (resp. second) disjunct guesses a "cut" towards which this "omega" (resp. "omega$^*$") sequence approaches. It then says that for any point strictly before (resp. after) this "cut", there is a point strictly after (resp. before) between which there is a $\psi$-labelled set of positions $\alpha$ where $\irank{\alpha} \geq n-1$. Clearly, this implies the $\psi$-labelled set of positions have "rank" $\geq n$. We conclude by observing that the $\focut$ formula $\Phi_k$ is equivalent to the formula $\phi$.
\qed \end{proof}

%
%
%
\section{No Finite Basis Theorems}
\label{sec:nofinbasis}
The main goal of this section is to prove that $\foinf, \focut$ and $\focut \cap \wmso$ over "countable
words" do not admit a "block product" based characterization which uses only a {\em finite} set
of \circmonoids. This is in stark contrast with the result in \cite{DBLP:conf/lics/AdsulSS19} 
which shows that a language of "countable words" is \fo-definable iff it is recognized
by a strong iteration of "block product" of copies of $\ali{0}$ (alternately called $\uone$). This is abbreviated by saying that
\fo\ has a block-product based characterization using a basis which contains the single
\circmonoid\ $\ali{0}$. 
Notice that, it follows from the results in the previous section that
\foinf\ admits a "block product" based characterization using the {\em natural infinite
basis} $\{\ali{n}\}_{n \in \nat}$.


Fix a finite \circalgebra\  $\monoid = \algebraOf \monoidset$. For every $n \in \nat$, we define
the operation $\ogi{n}: \monoidset \to \monoidset$ which maps $x$ to $x^{\ogi{n}}$.
The inductive definition of $\ogi{n}$ is as follows:
$x^{\ogi{0}} = x^!$ and
$x^{\ogi{n}} = (\omegaoper{(x^{\ogi{n-1}})}\omegastaroper{(x^{\ogi{n-1}})})^!$. 

\begin{lemma} \label{lem:fingapnest}
For each $m \in \monoidset$, there exists $n$ such that
$\forall n' \geq n, m^{\ogi{n}} = m^{\ogi{n'}}$.
\end{lemma}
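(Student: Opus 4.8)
The plan is to show that the sequence $(m^{\ogi{n}})_{n \in \nat}$ is non-increasing for the $\gJ$-preorder of the finite monoid $(\monoidset, \unit, \productoper)$ underlying $\monoid$, and then to promote mere $\gJ$-stabilisation to genuine pointwise stabilisation using the group structure of the $\gH$-class of an "idempotent".

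First I would record two elementary observations. Every $m^{\ogi{n}}$ is an "idempotent": $m^{\ogi 0}=m^!$ is, and every $m^{\ogi{n+1}}$ is by definition a $!$-power. Also, for any $e \in \monoidset$ one has $\omegaoper e = e \productoper \omegaoper e$ and $\omegastaroper e = \omegastaroper e \productoper e$, both obtained from "generalized associativity" by splitting off the first, resp.\ last, letter of $\omegaword e$, resp.\ $\omegastarword e$, whose domains $1+\natorder$ and $\negorder+1$ are isomorphic to $\natorder$ and $\negorder$.

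The key step is the claim that, writing $e := m^{\ogi n}$, we have $m^{\ogi{n+1}} \in e\monoidset e$. Indeed, by the observation above, $\omegaoper e \productoper \omegastaroper e = (e \productoper \omegaoper e)\productoper(\omegastaroper e \productoper e) \in e\monoidset e$; setting $p := \omegaoper e \productoper \omegastaroper e$ and using that $e$ is idempotent gives $p = e\productoper p \productoper e$, and since $e\monoidset e$ is closed under $\productoper$ (with identity $e$), the power $p^! = p^k$ (for a suitable $k \geq 1$) again lies in $e\monoidset e$. Thus $m^{\ogi{n+1}} = p^! \in e\monoidset e \subseteq \monoidset e \monoidset$, so $m^{\ogi{n+1}} \leqJ m^{\ogi n}$. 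Consequently $m^{\ogi 0} \Jgeq m^{\ogi 1} \Jgeq \cdots$ is a descending $\gJ$-chain in a finite monoid, hence cannot be strictly descending indefinitely, and there is some $n$ with $m^{\ogi n} \Jeq m^{\ogi{n+1}}$.

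The remaining step — the one that actually needs structure theory, since a descending $\gJ$-chain on its own only yields $\gJ$-equivalence, not equality — is the classical fact: if $e, f$ are idempotents with $f \in e\monoidset e$ and $f \Jeq e$, then $f = e$. This holds because $f \in e\monoidset$ and $f \in \monoidset e$ give $f \leqR e$ and $f \leqL e$, which by stability in the finite monoid force $f \Req e$ and $f \Leq e$, hence $f \Heq e$; and since the $\gH$-class of $e$ is a group with identity $e$ while the only idempotent of a group is its identity, $f = e$. Applying this with $f = m^{\ogi{n+1}}$ and $e = m^{\ogi n}$ (both idempotents, and $f \in e\monoidset e$ by the key step) yields $m^{\ogi{n+1}} = m^{\ogi n}$. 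Finally, since $m^{\ogi{k+1}}$ is determined by $m^{\ogi k}$ alone, one coincidence of consecutive terms propagates, giving $m^{\ogi{n'}} = m^{\ogi n}$ for all $n' \geq n$. I expect the main obstacle to be exactly this last upgrade from $\gJ$-equivalence to equality; the rest is bookkeeping with "generalized associativity" and finiteness.
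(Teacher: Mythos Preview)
Your proof is correct and follows essentially the same route as the paper: show the sequence $(m^{\ogi n})_n$ is $\gJ$-descending, use finiteness to get $\gJ$-stabilisation, then upgrade to equality via the $\gH$-class of an idempotent. Your final step (two $\gH$-equivalent idempotents coincide) is in fact cleaner than the paper's phrasing, which reaches the same conclusion by arguing that the relevant $\gH$-class must be trivial.
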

\begin{proof}
Let $m$ be an arbitrary element in a $\circmonoid\ \monoid$. We show that there exists an
$n$ such that for all $n' > n$ we have $m^{\ogi{n}} = m^{\ogi{n'}}$.  Consider the
following sequence: $a_0 = m$ and $a_{j+1} = (\omegaoper{(a_j)} \productoper
\omegastaroper{(a_j)})^!$. Note that $m^{\ogi j} = a_j$ for all $j$. We argue that for
consecutive $a_j, a_{j+1}$ either $a_{j} \Jg a_{j+1}$ or $a_{j+1} = a_j$. Let us assume
$a_{j} \Jeq a_{j+1}$. Clearly $a_{j+1}$ is $\Req$ and $\Leq$ equivalent to $a_j$.
Therefore $a_j \Heq a_{j+1}$. Since in a $\Jeq$ class containing a group $a_j \nJeq
\omegaoper{a_j}$ we have that the $\Heq$ class of $a_j$ has cardinality one and therefore
$a_j = a_{j+1}$.
\qed \end{proof}

\AP
We now define the \intro{gap-nesting-length} of $\monoid$ (in notation, $\intro[\gnl]{}\gnl{\monoid}$) to be 
the smallest $n$ such that for all $m \in M$, $m^{\ogi{n}} = m^{\ogi{n+1}}$.
It follows from the previous lemma that a finite \circalgebra\ has a finite
"gap-nesting-length". It is a simple computation that,
for each $k$, $\gnl{\ali{k}}=k$. 
The following main technical lemma is the key to our no-finite-basis theorems.
\begin{lemma}\label{main-lemma} For finite aperiodic\footnote{This means the underlying
	monoid of a \circalgebra\ is aperiodic} \circalgebras\ \monoid\ and \monoidN\ , 
	\begin{enumerate}
	\item We have, $\gnl{\monoid \blockproduct \monoidN} \leq
			\max{(\gnl{\monoid},\gnl{\monoidN})}$. 
	\item If $\monoid$ divides $\monoidN$ then $\gnl{\monoid} \leq \gnl{\monoidN}$.
	\end{enumerate}
\end{lemma}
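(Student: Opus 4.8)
\emph{The plan.} I would prove part (2) first, since it is short and feeds into part (1). For (2), the point is that for every $n$ the map $x\mapsto x^{\ogi n}$ is built by composing $\productoper$, the operations $\algomegasymb$ and $\algomegastarsymb$, and the idempotent-power map $x\mapsto x^{!}$ (indeed, this is its inductive definition). A \circalgebra\ morphism commutes with $\productoper,\algomegasymb,\algomegastarsymb$ by definition of morphism, and it commutes with $x\mapsto x^{!}$ because it sends $x^{k}$ to $h(x)^{k}$ and idempotents to idempotents, hence sends the unique idempotent power of $x$ to that of $h(x)$. Therefore $h(x^{\ogi n})=h(x)^{\ogi n}$ for every morphism $h$ and every $x$. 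Now suppose $\monoid$ divides $\monoidN$ through a sub-\circalgebra\ $\mathbf{N}'\subseteq\monoidN$ and a surjective morphism $\phi\colon\mathbf{N}'\to\monoid$, and set $n=\gnl{\monoidN}$. Since the operations of $\mathbf{N}'$ are the restrictions of those of $\monoidN$, every $x\in\mathbf{N}'$ already satisfies $x^{\ogi n}=x^{\ogi{n+1}}$; hence for $m\in M$ with a preimage $m'\in\mathbf{N}'$ we get $m^{\ogi n}=\phi((m')^{\ogi n})=\phi((m')^{\ogi{n+1}})=m^{\ogi{n+1}}$, that is, $\gnl{\monoid}\le n$.

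\emph{Part (1).} Write $G=\monoid\blockproduct\monoidN$, $d_M=\gnl{\monoid}$, $d_N=\gnl{\monoidN}$, $k=\max(d_M,d_N)$, and fix $g=(m,f)\in G$ with $f\colon M\times M\to N$. The first coordinate is controlled by the canonical projection $p_1\colon G\to\monoid$, which is a surjective \circalgebra\ morphism: by part (2), $p_1(g^{\ogi n})=m^{\ogi n}$, and for $n\ge d_M$ this equals the fixed idempotent $e:=m^{\ogi{d_M}}$ (the sequence $(m^{\ogi j})_{j\ge 0}$ is constant from its first repetition on, since $\ogi{j+1}$ is obtained from $\ogi j$ by the fixed map $y\mapsto(\omegaoper y\,\omegastaroper y)^{!}$). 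So for $n\ge d_M$ one may write $g^{\ogi n}=(e,F_n)$, and the statement reduces to $F_k=F_{k+1}$. For this I would unfold the explicit \circalgebra\ structure of the block product from \cite{DBLP:conf/lics/AdsulSS19} (equivalently, the construction underlying Proposition~\ref{prop:bpp}): for each $(s,t)\in M\times M$, the value $F_n(s,t)$ is the $\monoidN$-product of a word over $N$ whose underlying linear order is the one defining the $\algomegasymb/\algomegastarsymb/!$-expression for $g^{\ogi n}$, and whose letters are $f(s\cdot c,\;c'\cdot t)$ for $\monoid$-contexts $c,c'$ generated from $m$. The essential observation is that $g^{\ogi n}$ is an idempotent of $G$, so its $\monoid$-component $e$ is idempotent and the partial products $e^{i}$ are already stationary for $i\ge 1$; hence, when passing from $g^{\ogi n}$ to $g^{\ogi{n+1}}=(\omegaoper{(g^{\ogi n})}\,\omegastaroper{(g^{\ogi n})})^{!}$, each new $\algomegasymb$- (resp.\ $\algomegastarsymb$-) layer contributes in the $N$-coordinate only one boundary letter followed by $\omega$ (resp.\ $\omega^*$) copies of a single fixed element of $N$; so its $\monoidN$-value is an $\algomegasymb$- (resp.\ $\algomegastarsymb$-) image of a fixed element of $N$, up to fixed left and right $\monoidN$-factors. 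Consequently, for $n\ge d_M$ the value $F_n(s,t)$ is obtained from a fixed finite set of elements of $N$ by an $n$-fold nesting of $\algomegasymb,\algomegastarsymb,!$ inside $\monoidN$, bracketed by $n$-independent $\monoidN$-products; such a nested expression no longer changes once its nesting depth reaches $\gnl{\monoidN}=d_N$. Hence $F_k=F_{k+1}$, so $g^{\ogi k}=g^{\ogi{k+1}}$, that is, $\gnl{G}\le k$.

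\emph{The main difficulty.} I expect the delicate step to be the very end of part (1), where two things need care. First, one must know that an arbitrary finitely-nested $\algomegasymb/\algomegastarsymb/!$-expression over $\monoidN$ whose nesting depth is $\ge\gnl{\monoidN}$ has the same value as the one truncated to depth $\gnl{\monoidN}$ -- the definition of $\gnl{\monoidN}$ literally gives this only for the particular iteration pattern $\ogi n$ -- which should follow by the same $\gJ$-descent argument as in Lemma~\ref{lem:fingapnest}. Second, one must check that the finitely many ``boundary'' letters -- the positions of the generating word at which the $\monoid$-context has not yet stabilised to $e$ -- contribute the same $\monoidN$-factor for all $n\ge d_M$, so that no new data enters beyond depth $d_M$. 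Aperiodicity of $\monoid$ and $\monoidN$ (hence of $G$) is precisely what turns $x\mapsto x^{!}$ into a genuine stabilisation operator and keeps these boundary regions finite.
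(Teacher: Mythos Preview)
Your proof of part (2) is correct and essentially identical to the paper's.

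For part (1), however, your outline has a real gap, and it is exactly the one you yourself flag as the ``main difficulty''. You want to argue that $F_n(s,t)$ is an $n$-fold nesting of $\algomegasymb,\algomegastarsymb,!$ in $\monoidN$ bracketed by $n$-independent factors, and then appeal to $\gnl{\monoidN}$. But the definition of $\gnl{\monoidN}$ guarantees stabilisation only for the \emph{specific} iteration $\ogi n$, not for an arbitrary nested $\algomegasymb/\algomegastarsymb/!$-expression of depth $n$; the $\gJ$-descent argument of Lemma~\ref{lem:fingapnest} does not obviously generalise to such expressions, and you do not actually carry this out. Worse, $F_n$ itself does \emph{not} follow the pure $\ogi n$ pattern in $\monoidN$: what admits a clean inductive description is not $F_n$ but its $e$-saturation $e\ast F_n\ast e$, and recovering $F_n$ from that plus the ``boundary'' contributions is precisely the second difficulty you leave open.

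The paper sidesteps both issues with one clean computation. Its key lemma (Lemma~\ref{lem:gamma-bp}, proved via the action axioms and aperiodicity of $\monoid$) says: if $(e,g)=(m,f)^{\ogi n}$ then $e=m^{\ogi n}$ and $e\ast g\ast e=(e\ast f\ast e)^{\ogi n}$ in $\monoidK=\monoidN^{M\times M}$. This is \emph{exactly} the $\ogi n$ iteration, so $\gnl{\monoidN}\le k$ gives $e\ast g\ast e=(e\ast g\ast e)^{\ogi 1}$, and $\gnl{\monoid}\le k$ gives $e=e^{\ogi 1}$. The paper then invokes a Green's-relations fact from \cite{colcombetCLO}: in any \circalgebra, $x=x^{\ogi 1}$ forces $x=\omegaoper{x}\productoper\omegastaroper{x}$. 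Applying this to $e$ and to $e\ast g\ast e$, one computes $\omegaoper{(e,g)}\productoper\omegastaroper{(e,g)}$ directly from the block-product formulas and finds it equals $(e,g)^3=(e,g)$; hence $(m,f)^{\ogi{k+1}}=(e,g)^{\ogi 1}=(e,g)=(m,f)^{\ogi k}$. No arbitrary nesting patterns, no boundary bookkeeping.
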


Before proving Lemma~\ref{main-lemma}, we first recall the product and omega operations of
semidirect products from \cite{DBLP:conf/lics/AdsulSS19}.  Consider two \circalgebras\ 
$\monoid = \algebraOf M$ and $\monoidN = (N, \widehat {id}, + , \widehat{\tau},
\widehat{\tau}^*, \widehat{\kappa})$. We denote the left and right action of $M$ on $N$ by $*$.
The semidirect product is $\monoid \ltimes \monoidN = ( M \times N, \tilde{\cdot},
\tilde{\tau}, \tilde{\tau}^*, \tilde{\kappa} )$ where the operations are suitable defined.
We recall the product operation and the omega and omega$^*$ operation for an element
$(e,n)$ where $e$ is an idempotent.
\begin{enumerate}
	\item $(m_1, n_1) ~\tilde{\cdot}~ (m_2, n_2) = (m_1 \cdot m_2,  n_1 \ast m_2 ~+~ m_1 \ast n_2)$
	\item $(e, n)^{\tilde{\tau}} = (\omegaoper e, n \ast \omegaoper e ~+~ (e \ast n \ast \omegaoper e)^{\widehat{\tau}})$ 
	\item $(e, n)^{\tilde{\tau}^*} = (\omegastaroper e, (\omegastaroper e \ast n \ast e)^{\widehat{\tau}^*} ~+~ \omegastaroper e \ast n)$ 
\end{enumerate}
The block product of $\monoid$ and $\monoidN$, denoted $\monoid \blockproduct \monoidN$,
is the semidirect product of $\monoid$ and $\monoidK$ where $\monoidK$ is the direct
product of $|M| \times |M|$ copies of $\monoidN$. Note that the underlying set of
$\monoidK$ can also be considered as $N^{M \times M}$: the set of all functions from $M 
\times M$ to $N$. We refer to \cite{DBLP:conf/lics/AdsulSS19} for action axioms over \circalgebra.

\begin{lemma}\label{lem:id-bp}
	Let $(m,f) \in \monoid \blockproduct \monoidN$ be an idempotent. Then $m \in M$ is
	an idempotent and $m \ast f \ast m$ is an idempotent in $N^{M \times M}$.
\end{lemma}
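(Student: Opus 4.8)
The plan is a short, direct computation from the definition of the semidirect (hence block) product recalled just above, together with the action axioms for \circalgebras\ from \cite{DBLP:conf/lics/AdsulSS19}. No reasoning about the operations $\widehat{\tau}$, $\widehat{\tau}^*$, $\widehat{\kappa}$ of $\monoidN$ will be needed: only the product rule and the axioms governing the left/right action $\ast$ of $M$ on $\monoidK = N^{M\times M}$ come into play.

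First I would dispatch the claim about $m$: by the product rule $(m_1,n_1)\,\tilde{\cdot}\,(m_2,n_2) = (m_1\cdot m_2,\; n_1 \ast m_2 + m_1 \ast n_2)$, idempotency of $(m,f)$ gives $(m\cdot m,\; f\ast m + m\ast f) = (m,f)$, so $m\cdot m = m$ and, as a bonus identity we will reuse, $f\ast m + m\ast f = f$. Then, for the second claim, I would set $g := m\ast f\ast m \in \monoidK$ and show $g + g = g$, noting that idempotency in $\monoidK$ is with respect to its monoid operation $+$, evaluated componentwise in the copies of $\monoidN$. Applying the left action of $m$ and the right action of $m$ to both sides of $f\ast m + m\ast f = f$, and using that each action distributes over $+$, one obtains $ m\ast(f\ast m)\ast m \;+\; m\ast(m\ast f)\ast m \;=\; m\ast f\ast m \;=\; g$. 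Now the action axioms (multiplicativity of each action in its $M$-argument, commutation of the left and right actions) together with $m\cdot m = m$ collapse each summand to $g$: for the first, $(f\ast m)\ast m = f\ast(m\cdot m) = f\ast m$, hence $m\ast(f\ast m)\ast m = m\ast(f\ast m) = (m\ast f)\ast m = g$; for the second, $m\ast(m\ast f) = (m\cdot m)\ast f = m\ast f$, hence $m\ast(m\ast f)\ast m = (m\ast f)\ast m = g$. Substituting back yields $g + g = g$.

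The computation is routine; the only thing to be careful about is to invoke exactly the right action axioms for \circalgebras\ from \cite{DBLP:conf/lics/AdsulSS19} -- distributivity of the actions over the operation $+$ of $\monoidK$, multiplicativity of each action in the $M$-argument, and commutation of left with right actions -- and to observe that idempotency in $\monoidK = N^{M\times M}$ is tested coordinatewise. I do not anticipate a genuine obstacle here: this lemma is a bookkeeping step whose sole purpose is to feed into the proof of Lemma~\ref{main-lemma}, where the substantive argument (via Green's relations and the behaviour of $\ogi{n}$ under block products) actually takes place.
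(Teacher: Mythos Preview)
Your proof is correct and follows essentially the same approach as the paper's: both derive $m=m^2$ and $f = f\ast m + m\ast f$ from the product rule, then apply $m\ast(\cdot)\ast m$ to the latter identity and use the action axioms together with $m\cdot m = m$ to collapse both summands to $m\ast f\ast m$. The paper's proof merely compresses into one line the action-axiom computation you spell out explicitly.
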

\begin{proof}
	Since $(m,f)$ is an idempotent, by concatenation rule of block product algebra, we
	have $(m,f) = (m^2, f \ast m + m \ast f)$. Hence $m = m^2$, that is, $m \in M$ is
	an idempotent. Also, $f = f \ast m + m \ast f$ implies $m \ast f \ast m = m \ast f
	\ast m + m \ast f \ast m$. Hence, $m \ast f \ast m$ is an idempotent in $N^{M
	\times M}$.
\qed \end{proof}

\begin{lemma}\label{lem:idp-bp}
	Let $(m,f), (m',f') \in \monoid \blockproduct \monoidN$ such that $(m,f) = (m', f')^!$. Then $m
	= (m')^!$. Further, if $M$ is aperiodic, then $m \ast f \ast m = (m \ast
	f' \ast m)^!$.
\end{lemma}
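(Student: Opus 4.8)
The plan is to make $(m',f')^!$ concrete by writing it as a power $(m',f')^N$ and reading off the two coordinates. For the first claim, note that the projection $\pi_1 \colon \monoid \blockproduct \monoidN \to \monoid$ onto the first coordinate is a monoid morphism: this is immediate from the block-product concatenation rule $(m_1,n_1)\mathbin{\tilde{\cdot}}(m_2,n_2) = (m_1 \cdot m_2,\ n_1 \ast m_2 + m_1 \ast n_2)$. Any monoid morphism commutes with the idempotent-power operation $x \mapsto x^!$ (if $s^! = s^k$ then $\pi_1(s)^k = \pi_1(s^k)$ is idempotent and a power of $\pi_1(s)$, hence equals $\pi_1(s)^!$); applying $\pi_1$ to $(m,f) = (m',f')^!$ thus yields $m = \pi_1\big((m',f')^!\big) = \big(\pi_1(m',f')\big)^! = (m')^!$.

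For the second claim, fix $N \geq 1$ with $(m',f')^N = (m',f')^!$; applying $\pi_1$ once more gives $(m')^N = m$. Unrolling the concatenation rule by induction on the exponent, the second coordinate of $(m',f')^N$ is
\[
f \;=\; \sum_{i=0}^{N-1}(m')^i \ast f' \ast (m')^{N-1-i},
\]
with the convention that $(m')^0 = \unit$ acts trivially. Here aperiodicity of $M$ enters, and only here: from $m = (m')^!$ and aperiodicity we get $m \cdot m' = m' \cdot m = m$, hence $m \cdot (m')^i = (m')^i \cdot m = m$ for all $i \geq 0$. Using the left/right action and distributivity axioms of \cite{DBLP:conf/lics/AdsulSS19}, every summand of $m \ast f \ast m$ collapses to the same value:
\[
m \ast \big((m')^i \ast f' \ast (m')^{N-1-i}\big) \ast m \;=\; \big(m\cdot(m')^i\big)\ast f' \ast \big((m')^{N-1-i}\cdot m\big) \;=\; m \ast f' \ast m,
\]
so that $m \ast f \ast m$ equals the $N$-fold $+$-sum of $t := m \ast f' \ast m$ in $N^{M\times M}$.

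To finish, apply Lemma~\ref{lem:id-bp} to the idempotent $(m,f) = (m',f')^!$: it says that $m \ast f \ast m$ is idempotent in $N^{M\times M}$. So the $N$-fold sum $t + t + \cdots + t$ is idempotent and lies in the subsemigroup of $N^{M \times M}$ generated by $t$ under $+$; since a finite monogenic subsemigroup contains exactly one idempotent, this sum must be $t^!$. Hence $m \ast f \ast m = (m \ast f' \ast m)^!$, as claimed.

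The only genuine work is the bookkeeping in the second paragraph — establishing the closed form for the second coordinate of $(m',f')^N$ and checking that the action axioms indeed licence the collapse $m \ast (m')^i \ast f' \ast (m')^{N-1-i} \ast m = m \ast f' \ast m$. The one idea that keeps the proof short is in the last step: rather than having to pick $N$ so that the $N$-fold sum is simultaneously the idempotent power of $(m',f')$ and of $t$, we invoke Lemma~\ref{lem:id-bp} to know that the sum is idempotent for whatever valid $N$ was chosen, and uniqueness of the idempotent in a monogenic semigroup does the rest; note that this needs no hypothesis on $\monoidN$.
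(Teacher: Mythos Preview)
Your proof is correct and follows essentially the same route as the paper's: fix an exponent $N$ realising the idempotent power, read off the closed form $f = \sum_{i=0}^{N-1}(m')^i \ast f' \ast (m')^{N-1-i}$, use aperiodicity of $M$ to collapse $m \ast f \ast m$ to the $N$-fold sum of $m \ast f' \ast m$, and then invoke Lemma~\ref{lem:id-bp} plus uniqueness of the idempotent in a monogenic subsemigroup to conclude. The only cosmetic difference is that for the first claim you phrase the argument via the projection morphism $\pi_1$, whereas the paper appeals directly to Lemma~\ref{lem:id-bp}; the content is identical.
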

\begin{proof}
	Suppose $k$ is an idempotent power of $(m',f')$. So $m = (m')^k$ and $f = \sum_{i
	=0}^{k-1} (m')^i \ast f' \ast (m')^{k-i-1}$. By Lemma~\ref{lem:id-bp}, $m$ is an
	idempotent, so $m = (m')^!$.

	If $M$ is aperiodic, then $(m')^j = m$ for $j \geq k$. Hence $m \ast f \ast m = (m
	\ast f' \ast m)^k$. By Lemma~\ref{lem:id-bp}, $m \ast f \ast m$ is an idempotent.
	So $m \ast f \ast m = (m \ast f' \ast m)^!$.
\qed \end{proof}

\begin{lemma}\label{lem:action-gamma}
	Consider \circalgebra\ $\monoid$ has compatible left and right actions on
	\circalgebra\ $\mathbf{P}$. 
	Let $m,m' \in M$ and $p \in P$.
	Then  $m \ast p^{\gamma_n} \ast m' =
	(m \ast p \ast m')^{\gamma_n}$ 
\end{lemma}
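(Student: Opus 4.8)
The plan is to prove $m \ast p^{\ogi{n}} \ast m' = (m \ast p \ast m')^{\ogi{n}}$ by induction on $n$, using the action axioms of \cite{DBLP:conf/lics/AdsulSS19}: that the left and right actions commute (so $m \ast p \ast m'$ is unambiguous), that the action distributes over the product $\productoper$ of the acted-upon \circalgebra\ (both from the left and from the right), and that it commutes with the unary operators $\algomegasymb$ and $\algomegastarsymb$, i.e. $m \ast \omegaoper{p} \ast m' = \omegaoper{(m \ast p \ast m')}$ and $m \ast \omegastaroper{p} \ast m' = \omegastaroper{(m \ast p \ast m')}$. As a preliminary, distributivity over $\productoper$ gives, by a trivial induction on $k$, the sub-claim $m \ast p^{k} \ast m' = (m \ast p \ast m')^{k}$ for every finite $k \geq 1$. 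This settles the base case $n=0$: choosing $k$ with $p^{k} = p^{!}$ yields $m \ast p^{\ogi{0}} \ast m' = m \ast p^{!} \ast m' = (m \ast p \ast m')^{k}$, and applying the action to $p^{!} = p^{!} \productoper p^{!}$ shows this element is idempotent, so being an idempotent power of $m \ast p \ast m'$ it must equal $(m \ast p \ast m')^{!} = (m \ast p \ast m')^{\ogi{0}}$.

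For the inductive step I would set $q = p^{\ogi{n-1}}$ and $r = m \ast p \ast m'$, so the inductive hypothesis reads $m \ast q \ast m' = r^{\ogi{n-1}}$. Since $p^{\ogi{n}} = (\omegaoper{q} \productoper \omegastaroper{q})^{!}$, applying the base case to the element $\omegaoper{q} \productoper \omegastaroper{q}$ reduces the goal to evaluating $m \ast (\omegaoper{q} \productoper \omegastaroper{q}) \ast m'$. Distributivity over $\productoper$ splits this as $(m \ast \omegaoper{q} \ast m') \productoper (m \ast \omegastaroper{q} \ast m')$; commuting the action past $\algomegasymb$ and $\algomegastarsymb$ turns it into $\omegaoper{(m \ast q \ast m')} \productoper \omegastaroper{(m \ast q \ast m')}$; and the inductive hypothesis rewrites $m \ast q \ast m'$ as $r^{\ogi{n-1}}$. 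Taking $(\cdot)^{!}$ then gives $m \ast p^{\ogi{n}} \ast m' = (\omegaoper{(r^{\ogi{n-1}})} \productoper \omegastaroper{(r^{\ogi{n-1}})})^{!} = r^{\ogi{n}} = (m \ast p \ast m')^{\ogi{n}}$, closing the induction.

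I do not anticipate a real obstacle: the argument is essentially bookkeeping once the correct action axioms are on the table. The one point that needs genuine (if small) care is in the base case — the idempotent powers of $p$ and of $m \ast p \ast m'$ need not coincide, so one argues via the fact that any power of an element $x$ that happens to be idempotent must equal $x^{!}$, rather than by matching exponents. A purely notational precaution is to keep track of whether the operators $\algomegasymb$, $\algomegastarsymb$, and $(\cdot)^{!}$ occurring at a given step are those of $\mathbf{P}$ or of $\monoid$, even though the identity itself is insensitive to this.
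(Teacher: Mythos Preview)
Your proposal is correct and follows essentially the same route as the paper: induction on $n$, with the inductive step unwound by distributing the action over $\productoper$, commuting it past $\algomegasymb$ and $\algomegastarsymb$, and invoking the inductive hypothesis. The only (cosmetic) difference is in the base case: the paper picks a single $k$ that is simultaneously an idempotent power of $p$ and of $m \ast p \ast m'$, whereas you choose $k$ for $p$ alone and then observe directly that $m \ast p^{!} \ast m'$ is idempotent---which is exactly the care point you flagged.
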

\begin{proof}We first prove that $m \ast p^! \ast m' = (m \ast p \ast m')^!$.
By action axioms for concatenation, it is easy to see that $m \ast p^k \ast m' = (m \ast p
\ast m')^k$ for any natural number $k \geq 1$. Note that there exists $k \in \nat$ such
that $p^k = p^!$ and $(m \ast p \ast m')^k = (m \ast p \ast m')^!$. Then 
$m \ast p^! \ast m' = m \ast p^k \ast m' = (m \ast p \ast m')^k = (m \ast p \ast
m')^!$.  

The proof is now by induction on $n$. For $n=0$, we have $m \ast
p^{\ogi{0}} \ast m = m \ast p^! \ast m = (m \ast p \ast m)^! = (m \ast p \ast
m)^{\ogi{0}}$.

For the inductive step, note that
\begin{align*}
	m \ast p^{\gamma_n} \ast m' &= m \ast (\omegaoper{(p^{\ogi{n-1}})} \productoper
	\omegastaroper{(p^{\ogi{n-1}})})^! \ast m'  & \text{defn. of $\ogi{n}$} \\
						    &= (m \ast
	(\omegaoper{(p^{\ogi{n-1}})} \productoper \omegastaroper{(p^{\ogi{n-1}})}) \ast
	m')^!  & \text{ } \\
	       &= ((m \ast \omegaoper{(p^{\ogi{n-1}})} \ast m') \productoper (m
	\ast \omegastaroper{(p^{\ogi{n-1}})} \ast m'))^!   & \text{action axiom for
		$\productoper$} \\
	&= (\omegaoper{(m \ast {(p^{\ogi{n-1}})} \ast m')} \productoper 
		\omegastaroper{ (m \ast {(p^{\ogi{n-1}})} \ast m')})^! & \text{action
			axiom for $\algomegasymb$, $\algomegastarsymb$} \\
	&= (\omegaoper{((m \ast p \ast m')^{\ogi{n-1}})} \productoper 
			\omegastaroper{((m \ast p \ast m')^{\ogi{n-1}})})^! & \text{
				induction hypothesis} \\
				    &= (m \ast p \ast m')^{\ogi{n}} & \text{defn. of
				    $\ogi{n}$}
\end{align*}
This completes the proof.
\qed \end{proof}

\begin{lemma}\label{lem:gamma-bp}
	Consider $(m,f), (m',f') \in \monoid \blockproduct \monoidN$ such that $(m,f) =
	(m',f')^{\ogi{n}}$. Then $m = (m')^{\gamma_n}$. If
	$\monoid$ is aperiodic, then
	$m \ast f \ast m = (m \ast f' \ast m)^{\ogi{n}}$.
\end{lemma}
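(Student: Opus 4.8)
The plan is to prove both assertions together by induction on $n$, using Lemma~\ref{lem:idp-bp} both as the base case (since $x^{\ogi 0}=x^!$) and as a tool in the inductive step. For the step, set $(p,g):=(m',f')^{\ogi{n-1}}$; since $\ogi{n-1}$ ends in an idempotent power, $(p,g)$ is an idempotent of $\monoid\blockproduct\monoidN$, so by Lemma~\ref{lem:id-bp} both $p$ and $p\ast g\ast p$ are idempotent. The first assertion requires no aperiodicity: the first-coordinate projection $\monoid\blockproduct\monoidN\to\monoid$ is a \circalgebra\ morphism — one reads this off the recalled formulas for $\tilde\cdot$, $\tilde\tau$, $\tilde{\tau}^*$, and for $(\cdot)^!$ one uses Lemma~\ref{lem:id-bp} (the first coordinate of an idempotent power is again an idempotent power) — hence it commutes with $\ogi n$, giving $m=(m')^{\ogi n}$. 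In particular $m=(\omegaoper p\cdot\omegastaroper p)^!$ with $p=(m')^{\ogi{n-1}}$.

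Two elementary identities will be used repeatedly: $p$ being idempotent gives the absorptions $p\cdot m=m\cdot p=m$, and aperiodicity of $\monoid$ (so $s^!\cdot s=s^!$, applied to $s=\omegaoper p\cdot\omegastaroper p$) gives $s\cdot m=m\cdot s=m$. The second assertion is then obtained in two reductions. \emph{Inner $\ogi{n-1}$:} from $p\cdot m=m\cdot p=m$ and action-associativity, $m\ast f'\ast m=m\ast(p\ast f'\ast p)\ast m$; now Lemma~\ref{lem:action-gamma}, together with the induction hypothesis $(p\ast f'\ast p)^{\ogi{n-1}}=p\ast g\ast p$ and one more absorption of $p$, yields $(m\ast f'\ast m)^{\ogi{n-1}}=m\ast g\ast m=:e$.

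\emph{Outer $(\cdot)^!$:} put $(s,t):=(p,g)^{\tilde\tau}\,\tilde\cdot\,(p,g)^{\tilde{\tau}^*}$, so that $(m,f)=(s,t)^!$; Lemma~\ref{lem:idp-bp} (here $\monoid$ aperiodic is used) gives $m\ast f\ast m=(m\ast t\ast m)^!$, so it remains to compute $m\ast t\ast m$. Expanding $t$ from the recalled formulas one has $t=h\ast\omegastaroper p+\omegaoper p\ast l$ with $h,l$ the explicit second coordinates of $(p,g)^{\tilde\tau}$ and $(p,g)^{\tilde{\tau}^*}$; conjugating by $m\ast(\cdot)\ast m$ and repeatedly using the action axioms (the action distributes over $+$, is associative with $\cdot$, and commutes with the $\widehat\tau$- and $\widehat{\tau}^*$-operations of $\monoidK$), together with $m\cdot p=m$ and $s\cdot m=m=m\cdot s$, collapses every summand to one of $e$, $e^{\widehat\tau}$, $e^{\widehat{\tau}^*}$, leaving $m\ast t\ast m=(e+e^{\widehat\tau})+(e^{\widehat{\tau}^*}+e)$. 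The \circalgebra\ identities $x+x^{\widehat\tau}=x^{\widehat\tau}$ and $x^{\widehat{\tau}^*}+x=x^{\widehat{\tau}^*}$ (instances of generalized associativity) in $\monoidK$ then give $m\ast t\ast m=e^{\widehat\tau}+e^{\widehat{\tau}^*}$, hence
\[ m\ast f\ast m=(e^{\widehat\tau}+e^{\widehat{\tau}^*})^!=(m\ast f'\ast m)^{\ogi n}, \]
the last equality being the definition of $\ogi n$ applied inside $\monoidK$ to $m\ast f'\ast m$ (whose $\ogi{n-1}$-image is $e$). This completes the induction.

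The hard part will be the computation of $m\ast t\ast m$ in the outer reduction: unwinding the $\tilde\tau$ and $\tilde{\tau}^*$ formulas produces several summands carrying the $\widehat\tau$- and $\widehat{\tau}^*$-operations of $\monoidK$ and both actions of $\monoid$, and one must verify carefully — relying on idempotency of $p$, aperiodicity of $\monoid$, and the action axioms of \cite{DBLP:conf/lics/AdsulSS19} — that conjugation by $m\ast(\cdot)\ast m$ reduces each one to a term built solely from $e=m\ast g\ast m$, so that the two $\monoidK$-identities above can finish the argument.
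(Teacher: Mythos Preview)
Your proposal is correct and follows the same inductive strategy as the paper: base case via Lemma~\ref{lem:idp-bp}, and in the inductive step the inner $\ogi{n-1}$ is handled exactly as the paper does, via the absorption $m\productoper p=p\productoper m=m$, Lemma~\ref{lem:action-gamma}, and the induction hypothesis $p\ast g\ast p=(p\ast f'\ast p)^{\ogi{n-1}}$. The only difference is the outer $\ogi{1}$ step: the paper simply re-invokes the lemma at level~$1$ (strong induction) to get $m\ast f\ast m=(m\ast g\ast m)^{\ogi 1}$ and then chains the equalities, whereas you unroll that invocation by an explicit block-product computation of $m\ast t\ast m$ --- longer, but with the mild advantage that your induction is self-contained at $n=1$.
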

\begin{proof}
	The proof is by induction on $n$. For the base case of $n=0$, we have $(m,f) =
	(m',f')^{\ogi{0}} = (m',f')^!$. By Lemma~\ref{lem:idp-bp}, $m = (m')^! =
	m^{\ogi{0}}$ and if $\monoid$ is aperiodic, $m \ast f \ast m = (m \ast f' \ast m)^! = (m \ast f' \ast
	m)^{\ogi{0}}$. This proves the base case.

	For the inductive step, let $(m,f) = (m',f')^{\ogi{n}} =
	((m',f')^{\ogi{n-1}})^{\ogi{1}}$. Also let $(e,g) = (m',f')^{\ogi{n-1}}$. So $(m,f)
	= (e,g)^{\ogi{1}}$. By induction hypothesis, $e = (m')^{\ogi{n-1}}$ and $m =
	e^{\ogi{1}}$ implying $m = ((m')^{\ogi{n-1}})^{\ogi{1}} = (m')^{\ogi{n}}$.  If
	$\monoid$ is aperiodic, then by induction hypothesis, $e \ast g\ast e = (e \ast f'
	\ast e)^{\ogi{n-1}}$ and $m \ast f \ast m = (m \ast g \ast m)^{\ogi{1}}$. Note
	that since $m = e^{\ogi{1}} = (\omegaoper{e} \productoper \omegastaroper{e})^!$,
	we have $m \productoper e = e \productoper m = m$. Therefore
	\begin{align*}
		m \ast f \ast m &= (m \ast g \ast m)^{\ogi{1}} \\
				&= (m \ast (e \ast g \ast e) \ast m)^{\ogi{1}} \\
				&= (m \ast (e \ast f' \ast e)^{\ogi{n-1}} \ast
				m)^{\ogi{1}} \\
				&= ((m \ast f' \ast  m)^{\ogi{n-1}})^{\ogi{1}} \\
				&= (m \ast f' \ast m)^{\ogi{n}}
	\end{align*}
	This completes the proof.
\qed \end{proof}

\begin{proof} [of Lemma \ref{main-lemma}]

We first prove the first case. Consider two aperiodic $\ostar$-monoids $\monoid$ and $\monoidN$ such that 
	$\max{(\gnl{\monoid}, \gnl{\monoidN})} = k \in \nat$. We show that $\gnl{\monoid
	\blockproduct \monoidN} \leq k$.  Note that, for any $m \in \monoid$ and any $n
	\in \monoidN$, $m^{\ogi{k}} = m^{\ogi{k+1}}$ and $n^{\ogi{k}} = n^{\ogi{k+1}}$.  

	Let $(m,f) \in M \blockproduct N$ be an arbitrary element. We show that
	$(m,f)^{\ogi{k}} = (m,f)^{\ogi{k+1}}$. Let $(e,g) = (m,f)^{\ogi{k}}$. Then
	$(e,g)^{\ogi{1}} = (m,f)^{\ogi{k+1}}$. Also by Lemma~\ref{lem:gamma-bp}, $e =
	m^{\ogi{k}}$ and $e \ast g \ast e = (e \ast f \ast e)^{\ogi{k}}$.  Since $\monoid$
	and $\monoidN$ have gap-nesting-length less than or equal to $k$, we get
	$e = m^{\ogi{k}} = m^{\ogi{k+1}} = e^{\ogi{1}}$ and $e \ast g \ast e = (e \ast f
	\ast e)^{\ogi{k}} = (e \ast f \ast e)^{\ogi{k+1}} = (e \ast g \ast e)^{\ogi{1}}$.
	Now we use the fact that in any \circalgebra\ $x = x^{\ogi{1}}$ implies $x J
	\omegaoper x$ and $x J \omegastaroper x$ and that further implies $x = \omegaoper
	x \productoper \omegastaroper x$. See~\cite{colcombetCLO} for the details
	regarding the simple proof of this
	property based on Green's relations.

	Therefore we have $e = \omegaoper e \productoper \omegastaroper e$ and $e \ast g
	\ast e = \omegaoper {(e \ast g \ast e)} + \omegastaroper {(e \ast g \ast e)}$.
	Since $(e,g)$ is an idempotent by definition of the $\ogi{i}$ operation, we get
	that $e$ is an idempotent in $\monoid$ by Lemma~\ref{lem:id-bp}. Therefore
	\begin{align*}
		&\omegaoper{(e,g)} \productoper \omegastaroper{(e,g)} \\
		&= (\omegaoper e \omegastaroper e , g \ast \omegaoper e \omegastaroper e + 
		\omegaoper {(e \ast g \ast \omegaoper e \omegastaroper e)} +
		\omegastaroper {(\omegaoper e \omegastaroper e \ast g \ast e)} +
		\omegaoper e \omegastaroper e \ast g) \\
		&= (e, g \ast e + \omegaoper{(e \ast g \ast e)} + \omegastaroper{(e \ast g
		\ast e)} + e \ast g) \\
		&= (e, g \ast e + e \ast g \ast e + e \ast g) \\
		&= (e,g)^3 \\
		&= (e,g)
	\end{align*}
	Hence $(m,f)^{\ogi{k+1}} = (e,g)^{\ogi{1}} = (e,g) = (m,f)^{\ogi{k}}$. This
	completes the proof for the block product operation.

 Now we prove the second case. If $\monoid$ is a subalgebra of $\monoidN$, then the property is
	easily verified. Let's suppose $h \colon \monoidN \to \monoid$ is a surjective
	morphism, and $\gnl{\monoidN} = k$. For any $m \in \monoid$, there exists $n \in 
	\monoidN$ such that $h(n) = m$. It is straightforward to check that 
	$m^{\ogi{k}} = h(n^{\ogi{k}}) = h(n^{\ogi{k+1}}) = m^{\ogi{k+1}}$. This completes
	the proof.
\qed \end{proof}

\begin{corollary}
	$\foi{n} \subsetneq \foi{n+1}$.
\end{corollary}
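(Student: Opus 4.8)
The plan is to isolate a single separating "language" and rule it out of $\foi{n}$ using the gap-nesting-length invariant established above. The witness is $L_{n+1}$, the "language" over the alphabet $\{a,b\}$ defined by the $\foi{n+1}$-sentence $\ei{n+1}x~a(x)$; as noted just after the definition of the $\ali{k}$-chains, the syntactic \circalgebra\ of $L_{n+1}$ is exactly $\ali{n+1}$. Since the inclusion $\foi{n}\subseteq\foi{n+1}$ is immediate from the syntax, it suffices to show $L_{n+1}\notin\foi{n}$.

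I would argue by contradiction. Assume $L_{n+1}$ is $\foi{n}$-definable. By Theorem~\ref{thm:fo-inf-n}, $L_{n+1}$ is then "recognized" by some \circalgebra\ $P$ obtained as a finite iterated "block product" of copies of $\ali{n}$. The underlying monoid of $\ali{n}$ is aperiodic, as every one of $\unit,0,\dots,n$ is $\productoper$-idempotent; moreover, taking "block products" preserves aperiodicity of the underlying monoid (the underlying monoid of a "block product" is a semidirect product built from direct powers of the underlying monoids, and aperiodic monoids are closed under direct product and semidirect product). Hence $P$ is a finite aperiodic \circalgebra. A routine induction on the number of factors, applying part~(1) of Lemma~\ref{main-lemma} at each step and using the base value $\gnl{\ali{n}}=n$ recorded after Lemma~\ref{lem:fingapnest}, yields $\gnl{P}\le n$.

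Since $\ali{n+1}$ is the syntactic \circalgebra\ of $L_{n+1}$, it "divides" every \circalgebra\ that "recognizes" $L_{n+1}$, in particular $P$. Both $\ali{n+1}$ and $P$ are aperiodic, so part~(2) of Lemma~\ref{main-lemma} gives $\gnl{\ali{n+1}}\le\gnl{P}\le n$. This contradicts $\gnl{\ali{n+1}}=n+1$. Therefore $L_{n+1}\notin\foi{n}$, and the inclusion $\foi{n}\subseteq\foi{n+1}$ is strict.

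The only point that needs a little care — and I expect it to be the mildest of obstacles — is the aperiodicity bookkeeping required to stay inside the hypotheses of Lemma~\ref{main-lemma}: one has to observe that every iterated "block product" of copies of $\ali{n}$, as well as every subalgebra and every quotient (hence every divisor) of such a \circalgebra, is again aperiodic as a \circalgebra. These are standard closure facts for aperiodic monoids; with them in hand the corollary is a direct combination of Theorem~\ref{thm:fo-inf-n}, Lemma~\ref{main-lemma}, and the computation $\gnl{\ali{k}}=k$.
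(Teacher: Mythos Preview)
Your argument is correct and follows essentially the same route as the paper: use Theorem~\ref{thm:fo-inf-n} to see that the syntactic \circalgebra\ of any $\foi{n}$-definable language divides an iterated block product of copies of $\ali{n}$, apply both parts of Lemma~\ref{main-lemma} together with $\gnl{\ali{n}}=n$ to bound its gap-nesting-length by $n$, and contrast with $\gnl{\ali{n+1}}=n+1$ for the witness language $\ei{n+1}x~a(x)$. Your version is slightly more explicit about the aperiodicity hypothesis needed to invoke Lemma~\ref{main-lemma}, which the paper leaves implicit.
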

\begin{proof}
	By Theorem~\ref{thm:fo-inf-n}, the syntactic \circalgebra\ \monoid\ of any
	$\foi{n}$-definable "language" "divides" 
	a "block product" of copies of $\ali{n}$. By Lemma~\ref{main-lemma} and the fact
	that $\gnl{\ali{k}}=n$, $\gnl{\monoid} \leq n$. Note that, 
	$\ali{n+1}$ is the syntactic \circalgebra\ for the "language" $L$ defined by the $\foi{n+1}$ formula $\ei {n+1} x\ a(x)$.
	As $\gnl{\ali{n+1}}=n+1$, it follows that $L$ cannot be defined in $\foi{n}$.
\qed \end{proof}
\begin{theorem} There is no finite basis for a "block product" based characterization
for any of these logical systems $\foinf, \focut, \focut \cap \wmso$.
\end{theorem}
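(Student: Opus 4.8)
The plan is to prove the statement by contradiction, using the "gap-nesting-length" as a numerical invariant that stays bounded over everything recognizable by a \emph{fixed finite} family of iterated "block products", while being unbounded over $\foinf$. Recall (Theorem~\ref{thm:fo-inf-n}) that $\foinf$ does admit a "block product" based characterization, but only with the infinite basis $\{\ali{n}\}_{n\in\nat}$. So assume, towards a contradiction, that one of $\mathcal{L}\in\{\foinf,\focut,\focut\cap\wmso\}$ were characterized by finitely iterated "block products" of copies of members of some finite set $\mathcal{B}$ of \circalgebras. Then each member of $\mathcal{B}$ "recognizes" only "languages" of $\mathcal{L}$ (it "divides", for instance, $\mathbf{A}\blockproduct\mathbf{A}$, which is an iterated "block product" of members of $\mathcal{B}$).

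The first step is to ensure that Lemma~\ref{main-lemma} is applicable, that is, to check that every \circalgebra occurring in these "block products" is aperiodic. By Theorem~\ref{thm:foinf-focut-wmso} we have $\mathcal{L}\subseteq\focut$ in all three cases, and "languages" definable in $\focut$ are aperiodic (see \cite{colcombetCLO}); and a finite \circalgebra that "recognizes" only aperiodic "languages" is itself aperiodic, since its underlying "monoid" embeds into the finite "direct product", over its elements $m$, of the aperiodic syntactic "monoids" of the "languages" $\{w : \text{product of }w=m\}$. Hence every member of $\mathcal{B}$ is aperiodic. Furthermore, a "block product" of aperiodic \circalgebras is again aperiodic: its underlying "monoid" is the "monoid" "block product" of the underlying "monoids", and aperiodicity is preserved under this operation \cite{DBLP:conf/lics/AdsulSS19}. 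Consequently, every finitely iterated "block product" assembled from $\mathcal{B}$ is aperiodic.

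Now set $K=\max\{\,\gnl{\mathbf{A}}\ :\ \mathbf{A}\in\mathcal{B}\,\}$, which is a well-defined natural number, since each member of $\mathcal{B}$ is a finite \circalgebra and so has finite "gap-nesting-length" (Lemma~\ref{lem:fingapnest}). Applying part~(1) of Lemma~\ref{main-lemma} inductively along the iteration, every finitely iterated "block product" $\mathbf{P}$ of members of $\mathcal{B}$ satisfies $\gnl{\mathbf{P}}\le K$. If such a $\mathbf{P}$ "recognizes" a "language" $L$, then the syntactic \circalgebra of $L$ "divides" $\mathbf{P}$, and both are aperiodic, so part~(2) of Lemma~\ref{main-lemma} yields that the syntactic \circalgebra of $L$ too has "gap-nesting-length" at most $K$. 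Therefore every "language" in $\mathcal{L}$ has a syntactic \circalgebra of "gap-nesting-length" at most $K$. This is impossible: the "language" defined by $\ei{K+1}x\ a(x)$ lies in $\foi{K+1}\subseteq\foinf\subseteq\mathcal{L}$ (the last inclusion by Theorem~\ref{thm:foinf-focut-wmso} when $\mathcal{L}\ne\foinf$), yet its syntactic \circalgebra is $\ali{K+1}$, and $\gnl{\ali{K+1}}=K+1>K$.

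The combinatorial core is entirely packed into Lemma~\ref{main-lemma}, which is already available; the point that demands care in this theorem is the aperiodicity bookkeeping of the second step. One needs the cited fact that $\focut$ defines only aperiodic "languages" --- this is exactly what lets a single argument cover $\focut$, $\focut\cap\wmso$ and $\foinf$ at once, the latter two being sublogics of $\focut$ --- together with the routine preservation of aperiodicity under "block product", so that the hypotheses of Lemma~\ref{main-lemma} are indeed met by $\mathcal{B}$ and all of its iterated "block products". Everything else is just bookkeeping of the bound $\gnl{\cdot}$.
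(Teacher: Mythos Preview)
Your proof is correct and follows essentially the same approach as the paper: bound the "gap-nesting-length" over any finite basis via Lemma~\ref{main-lemma}, then derive a contradiction using the language defined by $\ei{K+1}x\,a(x)$ with syntactic \circalgebra\ $\ali{K+1}$. You are in fact more careful than the paper about the aperiodicity bookkeeping---the paper simply assumes the hypothetical basis consists of aperiodic \circalgebras, whereas you argue explicitly why this must hold and why aperiodicity propagates through iterated "block products", so that the hypotheses of Lemma~\ref{main-lemma} are met.
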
 
\begin{proof}
Fix one of the logics ${\mathcal L}$ mentioned in the statement of the theorem.
It follows from Theorem~\ref{thm:foinf-focut-wmso} and the algberaic chacterization \cite{colcombetCLO}
of $\focut$ that the syntactic \circalgebras\ of ${\mathcal L}$-definable languages
are aperiodic.
Now suppose, for contradiction, let ${\mathcal L}$ admit a finite basis 
$B$ of  aperiodic \circalgebras\ for its  "block product" based chacterization. 
Since $B$ is finite,  there exists $n \in \mathbb{N}$ such that for all 
\circalgebras\ \monoid in $B$, $\gnl{\monoid} \leq n$. It 
follows by Lemma~\ref{main-lemma} that 
the syntactic \circalgebra\ \monoidN\ of \emph{every} ${\mathcal L}$-definable language 
has the property $\gnl{\monoidN} \leq n$.

Now consider the language $L$ defined by the \foinf\ sentence $\phi=\ei {n+1} x ~a(x)$.  By Theorem~\ref{thm:foinf-focut-wmso}, $L$ is ${\mathcal L}$-definable. 
Hence, the "gap-nesting-length" of the syntactic \circalgebra\ $K$ of $L$ is less than or equal
to $n$. However, $K$ is simply $\ali{n+1}$ and $\gnl{\ali{m+1}}=n+1$. This leads to
a contradiction.
\qed \end{proof}

 \bibliographystyle{splncs04}
 \bibliography{papers}
%
%


%

\end{document}